\newcommand{\Var}{\text{Var}}
\newcommand{\Ex}{\E}
\DeclareMathOperator{\Cost}{cost}
\renewcommand{\cL}{\mathcal{L}}
\title{Constraint Satisfaction Problems with Advice}
\date{}
\author{
Suprovat Ghoshal\thanks{Supported by NSF Awards CCF-1955351.} \\ Northwestern and TTIC
\and
Konstantin Makarychev\thanks{Supported by NSF Awards CCF-1955351 and EECS-2216970.} 
\\ Northwestern  
\and
Yury Makarychev\thanks{Supported by NSF Awards CCF-1955173, CCF-1934843, and ECCS-2216899.} \\ TTIC 
}
\date{}
\begin{document}
\maketitle
\begin{abstract}
We initiate the study of algorithms for constraint satisfaction problems with ML oracle advice. We introduce two models of advice and then design approximation algorithms for Max Cut, Max 2-Lin, and Max 3-Lin in these models. In particular, we show the following. \\
\begin{itemize}
\item For Max-Cut and Max $2$-Lin, we design an algorithm that yields near-optimal solutions when the average degree is larger than a threshold degree, which only depends on the amount of advice and is independent of the instance size. We also give an algorithm for nearly satisfiable Max $3$-Lin instances with quantitatively similar guarantees.
\item Further, we provide impossibility results for algorithms in these models. In particular, under standard complexity assumptions, we show that Max $3$-Lin is still $1/2 + \eta$ hard to approximate given access to advice, when there are no assumptions on the instance degree distribution. Additionally, we also show that Max $4$-Lin is $1/2 + \eta$ hard to approximate even when the average degree of the instance is linear in the number of variables. 
\end{itemize}
\end{abstract}
\section{Introduction}
In recent years, numerous breakthroughs have occurred in machine learning (ML), and today, ML tools
can solve tasks that were completely out of reach even a decade ago.  This has sparked an interest in designing and using algorithms and data structures that rely on ML oracle advice (see e.g.~\cite{BDSV18,M18,PSK18,HIKV19, GP19, LV21}). 
Algorithms with access to ML advice can be used in the setting when we can learn some unknown information about the problem at hand using machine learning tools. This can be some information about the result or optimal solution (e.g., the position of an element in the sorted array~\cite{lu2021generalized,carvalho2023learnedsort,bai2024sorting} or membership in a set~\cite{kraska2018case,M18}), information about future events or yet unread data in the input stream (see e.g.,~\cite{mahdian2007allocating,devanur2009adwords,vee2010optimal,munoz2017revenue,GP19,lattanzi2020online,mitzenmacher2020scheduling}), as well as some information about the optimal set of parameters for the algorithm~\cite{BDSV18,DILMV21}.

In this paper, we introduce two models for solving constraint satisfaction problems (CSPs) with oracle advice. Suppose that we are given a constraint satisfaction problem with predicates $\Psi=\{\psi_1,\dots, \psi_m\}$ and Boolean variables $x_1,\dots, x_n$.\footnote{More generally, we can also consider the case where variables take values in some domain $[d]=\{1,\dots,d\}$.} It will be convenient for us to assume that value $1$ represents false and $-1$ represents true. Let $x^*= (x_1^*,\dots, x_n^*)$ be a fixed optimal solution, which we will refer to as the ground-truth solution. Now we assume that we are given noisy advice about $x_i^*$. Specifically, we consider two models.



\begin{itemize}
    \item {\bf Label Advice}. In the first model, the algorithm receives advice $\tilde x= (\tilde x_1,\dots, \tilde x_n)$, where $\tilde x_i$ is a noisy prediction of the ground-truth value $x_i^*$  Each $\tilde x_i$ is a random variable taking values $-1$ and $1$ and is slightly biased toward $x_i^*$. Namely, $\tilde x_i = x_i^*$ with probability $\frac{1+\varepsilon}{2}$ and $\tilde x_i = - x_i^*$ with probability $\frac{1-\varepsilon}{2}$. All variables $\tilde x_i$ are independent.

    \item {\bf Variable Subset Advice}. In the second model, the algorithm receives a random subset of variables/indices  $S\subseteq \{1,\dots, n\}$ and their values $(x^*_i)_{i\in S}$. Subset $S$ includes every $i$ with probability $\varepsilon$; all events $i\in S$ are independent.
 \end{itemize}

These models capture the setting where we have an ML algorithm (or another oracle) that provides unreliable predictions for values $x_i^*$.  Since the predictions are very noisy, they do not provide a good solution. Consider, for example, the Label Advice model for Max 2-Lin, a constraint satisfaction problem with constraints $x_i\cdot x_j = -1$ and $x_i\cdot x_j = 1$. Even if the optimal solution satisfies all the constraints,
solution $\tilde x$ satisfies only a $\frac{1+\varepsilon^2}{2}$ fraction of the constraints in expectation; this is just a tiny bit better that the $1/2$ fraction that a random solution satisfies.
The aim of this paper is to show that, nevertheless, advice $\tilde x$ may be very valuable. In this paper, we focus on the Max $k$-Lin problem. Our choice of the
Max $k$-Lin problem is not arbitrary. H{\aa}stad~\cite{Has01} proved that Max $3$-Lin is approximation resistant (as well as every Max $k$-Lin problem with $k\geq 3$). In other words, it is not possible to obtain a solution satisfying $\nicefrac{1}{2}+\eta$ fraction of all constraints in polynomial time even if the optimal solution satisfies $1-\delta$ fraction of all constraints (for every positive $\delta$ and $\eta$; assuming $P\neq NP$). For Max 2-Lin and Max Cut problems, the best approximation can be obtained using the Goemans-Williamson algorithm~\cite{GW95,KKMO07}.
We show how to get nearly optimal solutions using oracle advice for Max Cut, Max 2-Lin and almost satisfiable instances of Max 3-Lin when the number of constraints is sufficiently large (for unweighted instances, the number of constraints should be at least $C_{\varepsilon} n$ for Max Cut and $C_{\varepsilon,\delta} n$ for Max 3-Lin, where $C_{\varepsilon} n$ and $C_{\varepsilon,\delta} n$ depend only on $\varepsilon$ and $\varepsilon,\delta$, respectively). We complement our algorithmic results with hardness of approximation bounds. Specifically, we show that there are no polynomial-time algorithms for Max $k$-Lin instances with $k\geq 3$ in the Label Advice and Variable Subset Advice models  that given a $(1-\delta)$-satisfiable instance and advice find a solution satisfying at least a $(1/2 + \delta)$-fraction of the constraints with constant probability (for every $\delta > 0$). Furthermore, for $k\geq 4$, we show that there are no algorithms for $(1-\delta)$-satisfiable Max $k$-Lin instances with $n$ variables and $\Theta(n^2)$ constraints (``high-degree instances'') that satisfy at least a $0.99$ fraction of the constraints with high probability.

\paragraph{Comparing the Models.} We note that the Variable Subset Advice model provides more information than the Label Advice one. Indeed, given set $S$ and values $(x_i^*)_{i\in S}$, we can generate advice $\tilde x$ as follows: if $i\in S$, let $\tilde x_i = x_i^*$; otherwise, sample $\tilde x_i$ uniformly at random. It is immediate that 
$$\Pr(\tilde x_i = x_i^*) = \Pr(x_i = x_i^* \mid i\in S)\Pr(i\in S) + 
\Pr(x_i = x_i^* \mid i\notin S)\Pr(i\notin S) = 1\cdot \varepsilon+\nicefrac{1}{2}\cdot (1-\varepsilon) = \frac{1+\varepsilon}{2}$$
as required, and all $\tilde x_i$ are independent. Thus, every algorithm for the Label Advice model also works in the Variable Subset Advice model. For this reason, we will consider the Label Advice model in this paper.

\paragraph{Max Cut and Max $k$-Lin Problems.}
We recall the definitions of Max Cut and Max 2-Lin problems.
\begin{definition}[Max Cut]
In Max Cut, we are given an undirected graph $G=(V,E)$ with edge weights $w_e>0$. The goal is to find a cut $(S,T)$ that maximizes the total weight of cut edges.

Alternatively, Max Cut can be stated as a constraint satisfaction problem. We are given a set of Boolean variables $x_1,\dots, x_n$ and a set of constraints of the form $x_i \cdot x_j = -1$ (or, equivalently, $x_i\neq x_j$); each constraint has a non-negative weight. The goal is to find an assignment that maximizes the total weight of satisfied constraints.
\end{definition}
The connection between the graph and CSP formulations of Max Cut is straightforward: vertex $v_i$ corresponds to variable $x_i$ and edge $(v_i, v_j)$ corresponds to constraint $x_i\cdot x_j =-1$. If $v_i \in S$ then $x_i = -1$; if $v_i \in T$ then $x_i = 1$.
We now define Max 2-Lin, which is a generalization of Max Cut.
\begin{definition}[Max 2-Lin]
In Max 2-Lin, we are given a set of Boolean variables $x_1,\dots, x_n$ and a set of constraints of the form $x_i \cdot x_j = c_{ij}$ where $c_{ij}\in \{-1, 1\}$; each constraint has a non-negative weight $w_{ij}$. The goal is to find an assignment that maximizes the total weight of satisfied constraints.
\end{definition}

Both problems Max Cut and Max 2-Lin are NP-hard~\cite{Karp}. The Goemans--Williamson algorithm provides an $\alpha_{GW}=0.878\dots$ approximation~\cite{GW95} for them and, as Khot, Kindler, Mossel, and O’Donnell showed, this is optimal assuming the Unique Games Conjecture~\cite{KKMO07}. If a Max Cut or Max 2-Lin instance is almost satisfiable -- that is, the value of the optimal solution is $(1-\delta) W$, where $W$ is the total weight of all the constraints/edges -- then the Goemans--Williamson algorithm finds a solution of value $1- O(\sqrt{\delta})$. This result is again optimal assuming the Unique Games Conjecture~\cite{KKMO07}. See \cite{MM17} for a detailed discussion of these and other approximation and hardness results for constraint satisfaction problems.

\begin{definition}[Max $k$-Lin]
In Max $k$-Lin, we are given a set of Boolean variables $x_1,\dots, x_n$ and a set of constraints of the form $x_{i_1} \cdots x_{i_k} = c_{i_1\dots i_k}$ where $c_{i_1\dots i_k}\in \{-1, 1\}$; each constraint has a non-negative weight $w_{ij}$. The goal is to find an assignment that maximizes the total weight of satisfied constraints.
\end{definition}

Unlike the case of Max $2$-Lin, one cannot find near-optimal solutions to nearly satisfiable instances of Max $3$-Lin. In particular, H\r{a}stad~\cite{Has01} showed that for Max $4$-Lin is approximation resistant i.e., for instances where the optimal solution satisfies at least $1 - \delta$ fraction of constraints, it is NP-hard to find an assignment that satisfies more than $1/2 + \delta$ fraction of constraints, for every small constant $\delta > 0$.

\subsection{Our results}
We design polynomial-time approximation algorithms for Max Cut, Max 2-Lin, and Max 3-Lin. In unweighted graphs, the algorithm for Max Cut finds a nearly optimal solution if the average degree $\Delta = 2m/n$ is a sufficiently large constant (for a fixed parameter $\varepsilon$), specifically $\Delta \geq C/\varepsilon^2$ (here $m$ is the number of constraints/edges, $n$ is the number of variables/vertices; $C$ is a constant). 

\begin{theorem}\label{thm:Max2Lin-unweighted}
There exists a polynomial-time algorithm for the Label Advice model that given an unweighted instance of Max Cut or Max 2-Lin and advice $\tilde x$ finds a solution of value at least $(1-O(\nicefrac{1}{\varepsilon\sqrt{\Delta}})) \, OPT$ in expectation (over the random advice), where $OPT$ is the value of the optimal solution, $\varepsilon$ is the parameter of the model, and $\Delta$ is the average degree (see above).
\end{theorem}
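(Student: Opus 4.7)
The natural algorithm is a single-pass ``majority vote with advice'': for each vertex $i$, form the estimator
\[
Z_i \;=\; \sum_{j \in N(i)} c_{ij}\,\tilde x_j,
\]
and output $x_i = \operatorname{sign}(Z_i)$ (breaking ties using $\tilde x_i$). The motivation is that each term $c_{ij} \tilde x_j$ has conditional expectation $+\varepsilon x_i^*$ when the constraint $(i,j)$ is satisfied by $x^*$ and $-\varepsilon x_i^*$ otherwise, so $Z_i$ carries a signal of magnitude $\varepsilon |\mu_i|$ in the direction of $x_i^*$, on top of independent noise of variance at most $d_i$. Here $d_i$ is the degree of $i$ and $\mu_i := s_i - u_i$ is the margin of $x^*$ at $i$, with $s_i, u_i$ counting the constraints incident to $i$ that are satisfied, respectively unsatisfied, by $x^*$. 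Local optimality of $x^*$ immediately forces $\mu_i \geq 0$ for every $i$.

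The analysis proceeds as follows. Applying Hoeffding's inequality to the independent $\tilde x_j$'s gives $p_i := \Pr[x_i \neq x_i^*] \leq \exp(-\varepsilon^2 \mu_i^2 / (2 d_i))$ whenever $\mu_i > 0$. Since a constraint $(i,j)$ can be lost (relative to $x^*$) only when exactly one of $x_i, x_j$ disagrees with $x^*$, a union bound yields
\[
\E[\mathrm{OPT} - \operatorname{sat}(x)] \;\leq\; \sum_i d_i\, p_i.
\]
I would then bound this sum by splitting the vertex set along the signal-to-noise ratio $\sigma_i := \varepsilon \mu_i / \sqrt{d_i}$ at a threshold $S = \Theta(\sqrt{\log \Delta})$. ``Good'' vertices with $\sigma_i \geq S$ contribute at most $\sum d_i e^{-S^2/2} = O(m/\Delta) = O(n)$. ``Bad'' vertices with $\sigma_i < S$ split further: those with low degree $d_i < 4 S^2/\varepsilon^2$ together contribute $O(n \log \Delta/\varepsilon^2)$, while those with $d_i \geq 4 S^2 / \varepsilon^2$ satisfy $\mu_i < S \sqrt{d_i}/\varepsilon \leq d_i/2$, forcing $u_i \geq d_i/4$; their aggregate degree is therefore bounded by $4 \sum_i u_i = O(m - \mathrm{OPT})$. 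Combining, $\E[\operatorname{sat}(x)] \geq \mathrm{OPT} - O(m - \mathrm{OPT}) - O(n \log \Delta/\varepsilon^2)$.

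The main obstacle is turning this additive loss bound into the claimed multiplicative guarantee $(1 - O(1/(\varepsilon \sqrt{\Delta})))\,\mathrm{OPT}$ uniformly in $\mathrm{OPT}$: the term $O(m - \mathrm{OPT})$ is absorbed into the loss budget only when the instance is near-satisfied. To cover the remaining regimes I would return the best of a few candidate assignments: the majority-vote $x$ above, and the raw advice $\tilde x$ itself, whose expected objective value equals $m/2 + \varepsilon^2(\mathrm{OPT} - m/2)$ and hence suffices when $\mathrm{OPT}$ is close to $m/2$; if necessary, a classical $\alpha_{GW}$-approximation can be invoked to bridge the intermediate regime. The delicate step will be piecing the guarantees together so that, for every $\mathrm{OPT} \in [m/2, m]$, at least one candidate attains $(1 - O(1/(\varepsilon \sqrt{\Delta})))\,\mathrm{OPT}$ in expectation, after fixing the constant $C$ in the hypothesis $\Delta \geq C/\varepsilon^2$ large enough that the $n \log \Delta / \varepsilon^2$ term is dominated by $n \sqrt{\Delta}/\varepsilon$.
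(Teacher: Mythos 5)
Your majority-vote algorithm is a genuinely different route from the paper's, which instead maximizes the concave surrogate $\langle x, A\tilde x\rangle - \|A(\varepsilon x-\tilde x)\|_1$ over $[-1,1]^n$ and rounds coordinatewise, obtaining the purely \emph{additive} guarantee $OPT - \varepsilon^{-1}\sqrt{n}\,\|A\|_F = OPT - O(m/(\varepsilon\sqrt\Delta))$ with no dependence on $m-OPT$. The local part of your argument is sound: $\E[Z_i]=\varepsilon\mu_i x_i^*$, Hoeffding gives $p_i\le e^{-\varepsilon^2\mu_i^2/(2d_i)}$, and the union bound $\E[OPT-\operatorname{sat}(x)]\le\sum_i d_ip_i$ is valid. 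One fixable calibration issue: with $S=\Theta(\sqrt{\log\Delta})$ the low-degree term $n\log\Delta/\varepsilon^2$ can exceed the budget $O(n\sqrt\Delta/\varepsilon)$ when $\Delta$ is close to $C/\varepsilon^2$ and $\varepsilon$ is small; taking $S^2=\Theta(\log(\varepsilon\sqrt\Delta))$ instead makes every term except the last fit the budget.

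The genuine gap is the final step. Your bound is $\E[\operatorname{sat}(x)]\ge OPT - O(OPT/(\varepsilon\sqrt\Delta)) - \Theta(m-OPT)$, and the candidates you propose to absorb the $\Theta(m-OPT)$ term do not cover the intermediate regime. Writing $OPT=\tfrac m2(1+\beta)$: the majority vote meets the target only when $1-\beta=O(1/(\varepsilon\sqrt\Delta))$; the raw advice has expected value $\tfrac m2(1+\varepsilon^2\beta)$ and meets it only when $\beta=O(1/(\varepsilon\sqrt\Delta))$; and Goemans--Williamson gives only $\alpha_{GW}\,OPT\approx 0.878\,OPT$, which is never $(1-O(1/(\varepsilon\sqrt\Delta)))\,OPT$ once $C$ is large. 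So for, say, $OPT=\tfrac34 m$, none of the three candidates is guaranteed to reach the claimed bound. Closing this would require either a substantially sharper analysis of the majority vote itself --- e.g.\ starting from the exact identity $\operatorname{sat}(x)-OPT=-\sum_i\mu_iF_i-2\sum_{e=(i,j)}(1-2\operatorname{sat}_{x^*}(e))F_iF_j$ with $F_i=\mathbbm{1}[x_i\ne x_i^*]$, where the first sum is $O(\varepsilon^{-1}\sum_i\sqrt{d_i})=O(m/(\varepsilon\sqrt\Delta))$ but the correlated cross terms are the hard part --- or a different algorithm such as the paper's concave program, whose loss is additive and independent of $m-OPT$.
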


In weighted graphs, our results require that $n\sum_{ij}\left(\frac{w_{ij}}{W}\right)^2\ll \varepsilon^2$, where $W$ is the total weight of all constraints.
\begin{theorem}\label{thm:Max2Lin}
There exists a polynomial-time algorithm for the Label Advice model that given an instance of Max Cut or Max 2-Lin and advice $\tilde x$ finds a solution of value at least $OPT - \sqrt{n\sum_{ij}w_{ij}^2}/\varepsilon$ in expectation (over random advice), where $OPT$ is the value of the optimal solution, $\varepsilon$ is the parameter of the model, and $w_{ij}$ is the weight of the constraint for $x_i$ and $x_j$.
\end{theorem}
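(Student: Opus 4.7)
The plan is to analyze the following simple ``local-field'' algorithm. Writing $A$ for the symmetric matrix with $A_{ij}=w_{ij}c_{ij}$ and $A_{ii}=0$, so that $\mathrm{cost}(x) = \tfrac12\sum_{ij}w_{ij} + \tfrac12 B(x)$ with $B(x):=\tfrac12 x^\top A x$, the algorithm computes $\tilde a_i := (A\tilde x)_i = \sum_j w_{ij}c_{ij}\tilde x_j$ and outputs $x_i:=\operatorname{sgn}(\tilde a_i)$. Set $a_i:=(Ax^\ast)_i$ and $s_i:=\|A_{i,\cdot}\|_2 = \sqrt{\sum_j w_{ij}^2}$; by the global optimality of $x^\ast$ one has $x_i^\ast a_i = |a_i|$ and $B(x^\ast)=\tfrac12\sum_i|a_i|$.

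\textbf{Step 1 (single-coordinate bound).} Because $\mathbb{E}[\tilde a_i]=\varepsilon a_i$ and $\operatorname{Var}(\tilde a_i)\le s_i^2$, combining the identity $\mathbb{E}|\tilde a_i| = \mathbb{E}[\tilde a_i]\cdot\mathbb{E}[\operatorname{sgn}(\tilde a_i)] + \operatorname{Cov}(\tilde a_i,\operatorname{sgn}(\tilde a_i))$ with Jensen ($\mathbb{E}|\tilde a_i|\ge|\mathbb{E}\tilde a_i|$) and Cauchy--Schwarz ($|\operatorname{Cov}|\le \sqrt{\operatorname{Var}(\tilde a_i)}\le s_i$) gives $a_i\,\mathbb{E}[x_i] \ge |a_i| - s_i/\varepsilon$. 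Equivalently, setting $F:=\{i: x_i\neq x_i^\ast\}$, a Chebyshev-plus-AM/GM argument yields $|a_i|\Pr[i\in F]\le\min(|a_i|,s_i^2/(\varepsilon^2|a_i|))\le s_i/\varepsilon$.

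\textbf{Step 2 (loss decomposition).} With $\phi:=(x^\ast-x)/2 = x^\ast\odot\mathbf 1_F$, a direct expansion and the local-optimality identity $(Ax^\ast)\cdot x^\ast = \sum_i|a_i|$ yield
\[
B(x^\ast)-B(x) \;=\; 2\sum_{i\in F}|a_i| \;-\; 2\phi^\top A\phi.
\]
Taking expectations and invoking Step 1 together with Cauchy--Schwarz ($\sum_i s_i\le\sqrt{n\sum_{ij}w_{ij}^2}$), the first summand is bounded by $(2/\varepsilon)\sqrt{n\sum_{ij}w_{ij}^2}$.

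\textbf{Step 3 (quadratic correction).} When $A\succeq 0$---which covers MaxCut after the objective-preserving rewrite using the graph Laplacian---$\phi^\top A\phi\ge 0$ and the bound follows immediately (in fact with a better constant). For indefinite $A$ the plan is to replace $A$ by a PSD matrix $M$ that leaves the integer objective unchanged (since any diagonal shift contributes a constant on $\{\pm1\}^n$), and rerun the algorithm and Steps 1--2 with $M$ in place of $A$. A natural candidate is $M:=\operatorname{diag}(\mu^\ast)-A+A$ built from the SDP dual---i.e., using the fact that strong duality provides a diagonal $\operatorname{diag}(\mu^\ast)\succeq A$ with $\sum\mu_i^\ast = \mathrm{SDP}(A)$---so that the new row norms of the modified matrix remain controlled by $\sqrt{\sum w_{ij}^2}$ rather than blowing up.

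The hardest part will be Step 3 in the indefinite case. A naive shift $A+\lambda I$ with $\lambda=\|A\|_{\mathrm{op}}$ inflates the effective Frobenius norm by a factor of $\sqrt n$ and only delivers an additive error of order $n\|A\|_F/\varepsilon$; matching the target $\sqrt{n\sum w_{ij}^2}/\varepsilon$ requires either exploiting cancellation in the bilinear form $\mathbb{E}[\phi^\top A\phi]$ (bounding the expected cross-edge contribution $2\mathbb{E}[(x^\ast\odot \mathbf 1_F)^\top A(x^\ast\odot\mathbf 1_{\bar F})]$ directly through the structure of $F$ induced by the algorithm), or choosing the PSD replacement $M$ via the SDP dual in a way that keeps $\|M\|_F$ of the same order as $\|A\|_F$ while still giving $\phi^\top M\phi\ge 0$ for free.
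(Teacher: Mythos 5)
Your Steps 1 and 2 are correct as stated, but the proposal has a genuine gap at Step~3, and it is not a gap that a cleverer analysis can close: the local-field algorithm $x_i=\operatorname{sgn}((A\tilde x)_i)$ itself fails on instances covered by the theorem. Take Max Cut on the complete graph $K_n$ with unit weights, so $A=-W_{\mathrm{adj}}$, $OPT=\lfloor n^2/4\rfloor$, and the theorem promises value at least $n^2/4-n^{3/2}/\varepsilon$, which is nontrivial once $n\gg\varepsilon^{-2}$. For a balanced optimum $x^*$ one has $(A\tilde x)_i=\tilde x_i-T$ with $T=\sum_j\tilde x_j$; since $|T|\approx\sqrt n>1$ with high probability, $\operatorname{sgn}((A\tilde x)_i)=-\operatorname{sgn}(T)$ for every $i$, i.e.\ your algorithm outputs a constant assignment and cuts zero edges. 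In the language of your decomposition: $|a_i|=1$ while $s_i\approx\sqrt n$, so Step~1 gives no control on $\Pr[i\in F]$, and the correction term $-2\phi^\top A\phi\approx 2\bigl(\sum_{i\in F}x_i^*\bigr)^2\approx n^2/2$ dominates everything. Neither rescue you sketch repairs this. Replacing $A$ by the Laplacian, or by any shift $\operatorname{diag}(\mu)-A$ with $\operatorname{diag}(\mu)\succeq A$, forces diagonal entries of size at least about $d_i=\sum_j w_{ij}$, which is far larger than $s_i=\sqrt{\sum_j w_{ij}^2}$ (already for unweighted graphs, $d_i$ versus $\sqrt{d_i}$); the row norms entering Step~1 then sum to $\Theta(W)$ and the bound becomes vacuous. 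A nonzero diagonal also breaks the identity $x_i^*(Mx^*)_i=|(Mx^*)_i|$ on which both Steps~1 and~2 rely, since flipping $x_i$ no longer changes the form by $-2x_i(Mx)_i$.

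For comparison, the paper sidesteps the second-order term entirely: it maximizes the concave surrogate $F(x,\tilde x)=\langle x,A\tilde x\rangle-\|A(\varepsilon x-\tilde x)\|_1$ over the solid cube $[-1,1]^n$ and then rounds greedily coordinate-by-coordinate. The $\ell_1$ penalty certifies \emph{pointwise} (via H\"older) that $\langle x,Ax\rangle\ge F(x,\tilde x)/\varepsilon$ for the computed $x$, while feasibility of $x^*$ gives $\E\,F(x,\tilde x)\ge\varepsilon\langle x^*,Ax^*\rangle-\sqrt n\,\|A\|_F$. That penalty is exactly the certificate against the global cancellation that sinks the sign-of-local-field rule on $K_n$; any one-shot algorithm you substitute would need an analogous pointwise guarantee.
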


Next, we state our result for the Max $3$-Lin problem. As stated in the theorem below, we show that for nearly satisfiable Max $3$-Lin instances, there exists an efficient algorithm that finds a nearly satisfying assignment, if the average degree of the instance is large enough. 

\begin{restatable}{rethm}{alg}				\label{thm:3LIN}
	 There exists a polynomial-time algorithm that given a $(1-\delta)$-satisfiable (unweighted) instance~$\Phi$ of Max 3-Lin having $n$ variables and at least $\delta^{-1}\ln \frac{1}{\delta} \cdot \varepsilon^{-6}\cdot n$ constraints along with a label advice with parameter $\varepsilon>0$, returns a solution that satisfies a $1-O(\sqrt{\delta})$ fraction of all the constraints.
\end{restatable}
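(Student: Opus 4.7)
The plan is to show that a single round of advice-driven majority voting already produces an assignment satisfying a $1-O(\delta)$ (and hence $1-O(\sqrt{\delta})$) fraction of the constraints. For each variable $x_i$, I would define the aggregated vote
\[
  T_i \;\defeq\; \sum_{\psi=(i,j,k)\in\Phi} c_\psi\,\tilde{x}_j\,\tilde{x}_k,
\]
where the sum runs over $3$-Lin constraints containing $x_i$, and set $\hat{x}_i \defeq \operatorname{sign}(T_i)$. Each summand $c_\psi \tilde{x}_j \tilde{x}_k$ is a $\pm 1$ variable with mean $+\varepsilon^2 x_i^*$ if $\psi$ is satisfied by $x^*$ and $-\varepsilon^2 x_i^*$ otherwise, so $\mathbb{E}[T_i] = \varepsilon^2 (s_i - u_i)\,x_i^*$, where $s_i$ and $u_i$ count the constraints on $x_i$ satisfied or violated by $x^*$. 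Thus $T_i$ is biased toward $x_i^*$ whenever $s_i > u_i$, which we expect for most variables in a $(1-\delta)$-satisfiable instance.

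Next I would bound the misclassification rate via a second-moment argument. Expanding $\mathrm{Var}(T_i)$ and grouping pairs $(\psi,\psi')$ of constraints containing $i$ by the number of variables (other than $i$) they share, the disjoint-pair contributions cancel against $(\mathbb{E}[T_i])^2$, leaving $\mathrm{Var}(T_i) \leq d_i + O(\varepsilon^2)\sum_{\ell\neq i} d_{i\ell}^2$, where $d_{i\ell}$ denotes the number of constraints containing both $i$ and $\ell$. After preprocessing to handle ``dense pairs'' separately (pairs $(i,\ell)$ in many common constraints, which one can exploit directly since $d_{i\ell}$ constraints of the form $x_i x_\ell x_k = c$ effectively determine each $x_k$ from $x_i x_\ell$), we get $\mathrm{Var}(T_i) = O(d_i)$. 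Partition the variables into ``good'' ones ($u_i \leq d_i/4$) and ``bad'' ones ($u_i > d_i/4$): for good variables Chebyshev gives $\Pr[\hat{x}_i \neq x_i^*] = O(1/(\varepsilon^4 d_i))$, so $d_i \cdot \Pr[\hat{x}_i \neq x_i^*] = O(1/\varepsilon^4)$, summing to $O(n/\varepsilon^4)$; for bad variables the trivial bound $d_i \leq 4 u_i$ together with $\sum_i u_i \leq 3\delta m$ gives $\sum_{\mathrm{bad}} d_i \leq 12\delta m$. The degree hypothesis $m \geq \delta^{-1}\ln(1/\delta)\,\varepsilon^{-6}\,n$ ensures $n/\varepsilon^4 \ll \delta m$, so $\mathbb{E}\bigl[\sum_i d_i\,\mathbf{1}\{\hat{x}_i \neq x_i^*\}\bigr] = O(\delta m)$.

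Finally, an originally satisfied $3$-Lin constraint $\psi = (i,j,k)$ is violated by $\hat{x}$ only if at least one of $\hat{x}_i, \hat{x}_j, \hat{x}_k$ differs from $x^*$; summing this union bound over the three variables of each constraint shows that the expected number of newly broken constraints is at most $\mathbb{E}\bigl[\sum_i d_i\,\mathbf{1}\{\hat{x}_i \neq x_i^*\}\bigr] = O(\delta m)$. Adding the $\delta m$ constraints originally violated by $x^*$, the expected total violation under $\hat{x}$ is $O(\delta m) \leq O(\sqrt{\delta})\,m$, matching the claimed bound in expectation; a standard repetition-and-Markov boosting step then converts this into a deterministic polynomial-time algorithm.

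The main obstacle I anticipate is controlling $\mathrm{Var}(T_i)$ in the presence of dense pairs; the preprocessing sketched above handles the simple cases, but a cleaner treatment likely routes through a Hanson--Wright or bounded-difference concentration inequality, yielding an exponential-in-$\varepsilon^4 d_i$ tail bound that matches the $\ln(1/\delta)$ factor in the degree hypothesis and removes the need for the good/bad case distinction.
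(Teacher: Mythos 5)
Your majority statistic $T_i$ is exactly the quantity $\sigma_i$ that the paper's function $\textsc{Create-L-Constraints}$ computes, except that the paper restricts the sum to \emph{light} constraints. The dense-pair issue you flag at the end is not a side-case one can ``preprocess away''; it is the heart of the algorithm, and your proposal does not actually say how to assign values in that regime. Consider a variable $x_k$ all of whose constraints pass through a single pair $(i,j)$, say $D$ constraints of the form $x_i x_j x_k = c_{ijk}$ (duplicates are permitted in an unweighted instance). Then
\[
T_k \;=\; \Big(\sum_{(i,j,k)} c_{ijk}\Big)\,\tilde x_i\,\tilde x_j,
\]
a constant times the product of two noisy bits, so $\hat x_k = \operatorname{sign}(T_k)$ agrees with $x_k^*$ only with probability $(1+\varepsilon^2)/2$ no matter how large $D$ is. The degree hypothesis does not rescue you: duplicating each constraint $D$ times pushes the constraint count past $\delta^{-1}\ln\frac{1}{\delta}\,\varepsilon^{-6}n$ while making essentially every constraint heavy. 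In such an instance your Chebyshev bound $\Pr[\hat x_k\neq x_k^*]=O(1/(\varepsilon^4 d_k))$ is vacuous ($\geq 1$), and after you strip out the dense pair $x_k$ has no light constraints at all, so the ``good/bad variable'' accounting, which sums $d_i\Pr[\hat x_i\neq x_i^*]$, silently reduces to nothing while the variable remains unassigned.

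The deeper point is that a heavy set $E_{ij}$ reveals the \emph{product} $x_i^*x_j^*$ (by averaging $c_{ijk}\tilde x_k$), never $x_i^*$ or $x_j^*$ individually, and a single variable may sit in several heavy pairs or in a mix of heavy and light constraints with possibly conflicting local predictions. Turning such pairwise facts into an actual assignment requires a global consistency step. The paper's algorithm is built around exactly this: it constructs a weighted Max $2$-Lin instance $\Psi$ with constraints $x_i x_j = \sigma_{ij}$ and $x_k = \sigma_{ij}c_{ijk}$ from heavy sets and $x_i=\sigma_i$ from light sets (your $T_i$, restricted to light constraints), then runs Goemans--Williamson. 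That $2$-Lin reformulation is not optional; it is where the $\sqrt\delta$ in the theorem's guarantee comes from, since GW returns a $1-O(\sqrt\delta)$-satisfying assignment from a $(1-O(\delta))$-satisfiable Max $2$-Lin instance, and your claimed $O(\delta)$ total loss from a direct vote is too strong for this very reason. Your closing instinct that a bounded-differences (McDiarmid) inequality is the right concentration tool for the light case and that it matches the $\ln\frac{1}{\delta}$ factor in the degree hypothesis is correct --- that is precisely Lemma~\ref{lem:ineq:light} --- but as written your proposal contains only the light half of the argument and no mechanism for producing a consistent assignment in the presence of heavy pairs.
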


\paragraph{Hardness Results.} We complement the above positive results with a couple of hardness results. Our first result is the following theorem which shows that under standard assumptions, Max $3$-Lin retains its $(1/2 + o(1))$-inapproximability if there are no guarantees on its average degree, even in the setting where the algorithm has access to the oracle advice.

\begin{restatable}{rethm}{hardness}				\label{thm:3lin-hardness}
    Assume that the Exponential Time Hypothesis (ETH) and Linear Size PCP Conjecture hold. For every $\delta > 0$, there exists $\varepsilon_0 = \varepsilon_0(\delta)$ such that for every $\varepsilon\in [0, \varepsilon_0]$, there is no polynomial time algorithm for Max $3$-Lin in the variable subset advice model with parameter $\epsilon$ that given a $(1-\delta)$-satisfiable instance returns a solution satisfying at least a $(1/2+\delta)$-fraction of the constraints with probability at least 0.9 over the random advice.
\end{restatable}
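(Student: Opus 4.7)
My plan is to derive a contradiction with ETH, combined with the Linear Size PCP Conjecture, by using any such advice-model algorithm as an oracle inside a brute-force enumeration over all sufficiently typical advice strings. Starting from 3-SAT on $n$ variables (which requires $2^{\alpha n}$ time for some constant $\alpha > 0$ under ETH), the Linear Size PCP Conjecture yields a polynomial-time reduction to a Max 3-Lin instance $\Phi$ on $N = C(\delta) \cdot n$ variables such that a satisfiable 3-SAT formula produces a $(1-\delta)$-satisfiable $\Phi$, while an unsatisfiable one produces an instance on which no assignment satisfies more than a $(1/2+\delta)$-fraction of the constraints; distinguishing the two cases therefore requires time $2^{\Omega(N/C(\delta))}$. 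Suppose, for contradiction, that there is a polynomial-time algorithm $A$ in the variable subset advice model with parameter $\varepsilon$ that, on every $(1-\delta)$-satisfiable instance, returns a $(1/2+\delta)$-satisfying assignment with probability at least $0.9$ over the random advice.

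To eliminate the advice, I would fix a small slack constant $\tau > 0$ and enumerate every pair $(S, v)$ with $S \subseteq [N]$, $|S| \in [(1-\tau)\varepsilon N,\,(1+\tau)\varepsilon N]$, and $v \in \{-1,+1\}^{S}$; on each such pair I run $A$ with advice $(S, v)$ and accept if any returned assignment satisfies at least a $(1/2+\delta)$-fraction of $\Phi$'s constraints. The number of enumerated pairs is bounded by
$$\sum_{k=(1-\tau)\varepsilon N}^{(1+\tau)\varepsilon N} \binom{N}{k}\,2^k \;\leq\; \mathrm{poly}(N) \cdot 2^{\bigl(H((1+\tau)\varepsilon) + (1+\tau)\varepsilon\bigr)\,N},$$
where $H(\cdot)$ is the binary entropy. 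For correctness, in the completeness case fix any $(1-\delta)$-satisfying $x^*$; by Chernoff the truly random advice $(S^*, (x^*_i)_{i \in S^*})$ has $|S^*|$ in the above range except with probability $e^{-\Omega(\varepsilon N)}$, which combined with $A$'s $0.9$ success guarantee ensures that at least one enumerated pair coincides with an advice on which $A$ outputs a $(1/2+\delta)$-satisfying assignment, detected by the final check. In the soundness case no assignment achieves the threshold, so every final check fails.

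Finally, $\varepsilon_0(\delta)$ is chosen so that the enumeration beats the ETH-based lower bound. Since $H((1+\tau)\varepsilon) + (1+\tau)\varepsilon \to 0$ as $\varepsilon \to 0$, I would fix $\tau$ to be any small absolute constant and then pick $\varepsilon_0 = \varepsilon_0(\delta) > 0$ small enough that $H((1+\tau)\varepsilon) + (1+\tau)\varepsilon < \alpha/C(\delta)$ for every $\varepsilon \in [0,\varepsilon_0]$; the overall running time becomes $\mathrm{poly}(N) \cdot 2^{o(N/C(\delta))}$, contradicting the lower bound on the PCP-reduced Max 3-Lin problem. The delicate point, and essentially the only real obstacle, is this parameter calibration: one must track how the Linear Size PCP's blow-up $C(\delta)$ depends on the completeness/soundness gap and verify that, together with the ETH constant $\alpha$, it admits a single choice of $\varepsilon_0$ depending only on $\delta$ that uniformly makes the enumeration exponent strictly smaller than $\alpha/C(\delta)$ for all sufficiently large $N$.
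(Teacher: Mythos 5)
Your proposal is correct and follows essentially the same route as the paper: establish an ETH-based subexponential-time lower bound for gap Max 3-Lin (the paper's Lemma~\ref{lem:lc-hardness}, obtained via sparsification, the Linear Size PCP, and H\r{a}stad's Label Cover to 3-Lin reduction), then eliminate the advice by enumerating all subsets of size $O(\varepsilon n)$ together with all assignments on them in time $2^{O(\varepsilon\log(1/\varepsilon)n)}$ (the paper's Lemma~\ref{lem:dist}), and finally choose $\varepsilon_0(\delta)$ small enough that the enumeration beats the lower bound. The one point you defer --- that the Linear Size PCP Conjecture gives Label Cover, not Max 3-Lin directly, so the constant $C(\delta)$ must be traced through H\r{a}stad's reduction (where the variable blow-up is exponential in the label-set size) --- is exactly the calibration the paper carries out, and you correctly flag it as the remaining obstacle.
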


Next, we show that for the Max $4$-Lin problem, there are no efficient algorithms in the oracle advice setting that can find nearly optimal solutions, even when the average degree of the instance is large.

\begin{restatable}{rethm}{ldeg}				\label{thm:4lin-hardness}
    Assume that the Exponential Time Hypothesis (ETH) and Linear Size PCP Conjecture hold. Then there exists a constant $\delta_0 \in (0,1)$ such that for every $\delta \in (0,\delta_0)$, there exists $d(\delta) \in \mathbbm{N}$ and $\epsilon(\delta) \in (0,1)$ for which the following statement holds. For every $\epsilon \in (0,\epsilon(\delta))$, $n \geq n(\delta)$, and $d \in [d(\delta),n]$, there is no polynomial time algorithm for Max $4$-Lin in the variable subset advice model with parameter $\epsilon$ that, given a $(1 - \delta)$-satisfiable Max $4$-Lin instance with $n$ variables and at least $\Omega(d n)$ constraints, returns a solution satisfying a least a $(1/2 + \delta)$-fraction of the constraints with probability at least $0.9$ over the random advice.
\end{restatable}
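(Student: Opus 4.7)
The plan is to reduce from the advice-free hardness of Max $4$-Lin (obtained via H\r{a}stad's theorem and the Linear Size PCP Conjecture under ETH). The hypothesis provides instances $\Phi_0$ on $n_0$ variables with $O(n_0)$ constraints for which deciding $(1-\delta)$-satisfiable versus $(1/2+\delta)$-satisfiable requires super-polynomial time; let $y^*$ be a fixed hidden near-optimum. Given a target degree $d \in [d(\delta),n]$, set $k \asymp d^{1/3}$ and introduce variables $x_{i,j}$ for $(i,j) \in [n_0]\times[k]$, sampling independent uniform ``shift'' bits $r_{i,j} \in \{\pm 1\}$. For each base constraint $y_{i_1}y_{i_2}y_{i_3}y_{i_4}=c$ and each tuple $(j_1,\ldots,j_4) \in [k]^4$, add the derived constraint
\[ x_{i_1,j_1}\,x_{i_2,j_2}\,x_{i_3,j_3}\,x_{i_4,j_4} \;=\; c\cdot r_{i_1,j_1}\,r_{i_2,j_2}\,r_{i_3,j_3}\,r_{i_4,j_4}. \]
The output $\Phi$ has $N = n_0 k = n$ variables and $M = \Theta(Nd)$ constraints of average degree $\Theta(d)$.

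The intended ground truth is $x^*_{i,j} := y^*_i\,r_{i,j}$. A direct check shows that $x^*$ satisfies a new constraint iff $y^*$ satisfies the corresponding base constraint, so $\Phi$ is $(1-\delta)$-satisfiable. For value transfer in the other direction, given any assignment $\hat x$ for $\Phi$, sample $j(i)\in[k]$ uniformly for each $i$ and set $\hat y_i := \hat x_{i,j(i)} \cdot r_{i,j(i)}$: the expectation over $j(\cdot)$ of the $\Phi_0$-value of $\hat y$ equals the $\Phi$-value of $\hat x$. Hence a polynomial-time algorithm that, with probability $\geq 0.9$ over the advice, achieves $(1/2+\delta)$-value on $\Phi$, would yield (by sampling and standard concentration) a $(1/2+\delta/2)$-approximation for $\Phi_0$, contradicting the base hardness.

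The main difficulty is arguing that the variable subset advice cannot help. The crucial structural fact is that each $x^*_{i,j} = y^*_i r_{i,j}$ is marginally uniform in $\{\pm 1\}$, and the joint distribution $(x^*_{i,j})_{(i,j)\in S}$ is uniform over $\{\pm1\}^{|S|}$ by independence of the $r_{i,j}$. After the reparametrization $u_{i,j} := y^*_i r_{i,j}$ (a measure-preserving bijection of $r$ for fixed $y^*$), the advice becomes $u_S$ and the right-hand sides of $\Phi$ become $\sigma_C \prod_t u_{i_t,j_t}$ with $\sigma_C=\pm1$ encoding the base-constraint satisfiability. Thus the algorithm's view factors into (a) independent uniform advice bits and (b) a masked encoding of the base-satisfaction pattern $(\sigma_C)$, and recovering $(\sigma_C)$ in polynomial time is equivalent to solving $\Phi_0$. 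The hard part is converting this intuition into a rigorous simulation, namely, starting from $\Phi_0$ without advice, constructing a distribution on $(\Phi, x^*_S)$ that matches the true one without access to $y^*$, and then invoking the advice-based algorithm to derive a contradiction. The quantifiers $\varepsilon(\delta)$ and $d(\delta)$ in the theorem statement must be chosen precisely so that the effective fraction $\varepsilon k$ of base variables the advice could conceivably unmask stays below the base hardness threshold; this is what forces the dependence of $\varepsilon(\delta)$ and $d(\delta)$ on $\delta$.
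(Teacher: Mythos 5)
Your reduction (replicating each base variable into $k$ copies with random sign masks $r_{i,j}$) is a genuinely different route from the paper's, and it contains a real gap that you yourself flag: the claim that the variable-subset advice ``cannot help.'' The marginal uniformity of $x^*_{i,j}=y^*_i r_{i,j}$ is not enough; what matters is the joint distribution of the advice together with the instance, and after your reparametrization $u_{i,j}=y^*_i r_{i,j}$ the right-hand sides of the derived constraints become $\sigma_C\prod_t u_{i_t,j_t}$, where $\sigma_C$ encodes exactly the satisfaction pattern of $y^*$. A simulator that starts from $\Phi_0$ without knowing $y^*$ cannot produce an instance whose right-hand sides are consistent with the advice string it hands out, precisely because it would need the $\sigma_C$'s. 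So the ``rigorous simulation'' you defer is the entire content of the theorem, and it is not clear it can be carried out in this form. Note also that the theorem's ground truth is an optimal solution of the instance actually given to the algorithm, so you would additionally have to argue that the designated optimum of $\Phi$ is (or can be taken to be) your intended $x^*$.

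The paper neutralizes the advice in a completely different and much more elementary way (Lemma~\ref{lem:dist}): a polynomial-time algorithm with variable-subset advice of rate $\epsilon$ can be derandomized by enumerating all sets $S$ of size at most $2\epsilon n$ and all assignments to them, giving a deterministic advice-free algorithm running in time $2^{(\epsilon\log_2(4/\epsilon))n}\poly(n)$. This is why ETH and the Linear Size PCP Conjecture appear: one needs a $2^{c(\delta)n}$-time lower bound for gap Max $3$-Lin (Lemma~\ref{lem:lc-hardness}), not mere NP-hardness, and then one chooses $\epsilon(\delta)$ with $\epsilon\log_2(4/\epsilon)<c(\delta)$. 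Your construction is structurally incompatible with this enumeration step for large $d$: you blow the variable count up from $n_0$ to $n=n_0 k$ with $k\asymp d^{1/3}$, so the enumeration cost $2^{O(\epsilon\log(1/\epsilon))\,n_0 k}$ overwhelms any $2^{c n_0}$ base lower bound unless $\epsilon$ shrinks with $d$ (and hence with $n$), violating the requirement that $\epsilon(\delta)$ depend only on $\delta$. The paper's degree-amplification gadget (Lemma~\ref{lem:4-lin}) avoids this by adding only $t\le n$ fresh variables $y_1,\dots,y_t$ and replacing each $3$-Lin constraint by the $t$ constraints $x_i\oplus x_j\oplus x_k\oplus y_r=b_{ijk}$: the variable count at most doubles while the constraint count becomes $ndt$, so the subexponential lower bound survives and the advice-enumeration argument closes the proof. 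To repair your write-up along the paper's lines, you would replace your replication gadget by this one and replace the information-theoretic claim by the enumeration lemma.
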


\begin{remark} Theorem~\ref{thm:3lin-hardness} holds not only for Max 3-Lin but also for Max $k$-Lin with $k\geq 3$ and Theorem~\ref{thm:4lin-hardness} holds for Max $k$-Lin with $k\geq 4$. To see the former, note that any instance of 3-Lin can be converted to an equivalent instance of Max $k$-Lin as follows. Let us introduce $k-3$ new variables $y_1,\dots, y_{k-3}$ and then replace each constraint $x_ax_bx_c =c_{abc}$ with $x_ax_bx_cy_1\cdots y_{k-3} =c_{abc}$. Every solution of the original Max 3-Lin instance can be extended to a solution of the obtained Max $k$-Lin by letting $y_1=\dots=y_{k-3}=1$ and keeping all $x_i$ as is. Also, every solution of Max $k$-Lin can be transformed to a solution of Max $3$-Lin: if $y_1\cdots y_{k-3} = 1$, keep all $x_i$ as is; otherwise, negate all $x_i$. The same transformation shows that Theorem~\ref{thm:4lin-hardness} applies when $k\geq 4$.
In this case, however, the analysis of the transformation is non-black-box but still pretty simple.
\end{remark}

Note that while Theorems \ref{thm:3lin-hardness} and \ref{thm:4lin-hardness} have been stated for the variable subset advice model, by the discussion above, it follows that they also apply to the label advice model as well (with parameter $\epsilon/2$ instead of $\epsilon$).

\paragraph{Discussion.} Our results for Max-$2$-Lin and Max-Cut (Theorem \ref{thm:Max2Lin-unweighted}) show that using oracle advice we can find a near-optimal solution as long as the average degree of the instance is large enough, irrespective of the optimal value of the instance (importantly, the degree requirement depends on the amount of oracle advice $\varepsilon$ but not on the size of the instance). Theorem~\ref{thm:Max2Lin} gives an analogous result for the weighted variants of these problems. For Max $3$-Lin, we obtain a similar guarantee for {\em nearly satisfiable} instances (with a different requirement on the average degree), and it is an interesting open question if one can find near-optimal solutions for Max $3$-Lin instances that are not nearly satisfiable, given access to oracle advice. To complement our positive results, Theorem \ref{thm:3lin-hardness} shows that if there are no assumptions on the average degree of the instance, Max $3$-Lin is still $1/2 + o(1)$ hard to approximate even with oracle advice (under certain complexity assumptions). That is, our assumption on the average degree (or another related assumption) is necessary in our algorithmic result for Max 3-Lin. Finally, Theorem~\ref{thm:4lin-hardness} shows that Max 4-Lin is much more difficult than Max 2-Lin and Max 3-Lin: even if the input instance is nearly satisfiable and has a high average degree (the number of constraints is quadratic in the number of variables), it is hard to find a solution that satisfies strictly better than $1/2$-fraction of the constraints (in our oracle-advice models).

\subsection{Comparison with
Independent Works}
Independently and concurrently with our work, Cohen-Addad, d'Orsi, Gupta, Lee, Panigrahi~\cite{CdGLP24} proposed the same model of ML oracle advice for constraint satisfaction problems. In their work,
they provide approximation algorithms for Max Cut that have approximation factors $\alpha_{GW}+\tilde\Omega(\varepsilon^4)$ and $\alpha_{GW}+\Omega(\varepsilon)$ 
in the Label Advice and Variable Subset Advice models, respectively (they call these models \emph{Noisy Predictions} and \emph{Partial Predictions}). They also present an algorithm that finds almost optimal solutions for Max Cut on 
$\Delta$-wide graphs, a notion introduced in their paper. Loosely speaking, 
a graph is $\Delta$-wide if not all weight is concentrated on few edges. We refer the reader to~\cite{CdGLP24} for details. Given an instance of Max Cut satisfying $OPT$ constraints, the algorithm by Cohen-Addad et al.~\cite{CdGLP24} finds a solution of value $(1-O(\eta +\nicefrac{1}{\varepsilon\sqrt{\Delta}}))\,OPT$.
This result is somewhat similar but not directly comparable to our Theorems~\ref{thm:Max2Lin-unweighted} 
and \ref{thm:Max2Lin}.
For example, in regular unweighted and weighted $\Delta$-wide graphs, our algorithm finds a solution of value at least $(1-O(\nicefrac{\sqrt{\eta}}{\varepsilon\sqrt{\Delta}})) \, OPT$; this guarantee is stronger than the one given by Cohen-Addad et al.
However, our algorithm cannot be used on some $\Delta$-wide graphs, whose vertex degrees are very irregular. The authors complement their results with an algorithm for $\Delta$-narrow graphs i.e., graphs that are not $\Delta$-wide. This algorithm does not use any oracle advice. It is based on the algorithms by Feige, Karpinski,
and Langberg~\cite{FKL02} and Hsieh and Kothari~\cite{HK22} for low-degree instances of Max Cut. In their paper, Cohen-Addad et al.~\cite{CdGLP24} do not study Max $k$-Lin for $k>2$.

In an independent work~\cite{BEX24}, Bampis, Escoffier,  and Xefteris considered a similar (but not identical) notion of advice and provided PTAS for $\delta$-dense instances of Max Cut and other problems (a $\delta$-dense instance of Max Cut contains 
$\delta n(n-1)/2$ edges).
\section{Algorithm for Max Cut and Max 2-Lin with Advice}

Our algorithm finds a solution for Max 2-Lin instance by (approximately) maximizing the quadratic form 
\begin{equation}\label{eq:max-cut-qp}
\sum_{i,j} \frac{|a_{ij}| + a_{ij}x_ix_j}{4},
\end{equation}
where the matrix $A=(a_{ij})$ is defined as follows. For Max Cut, $A$ is minus adjacency matrix i.e., $a_{ij} = - w_{ij}$, where $w_{ij}$ is the weight of edge $(i,j)$. For Max 2-Lin, $a_{ij} = w_{ij}$ if we have constraint $x_ix_j = 1$ and $a_{ij} = -w_{ij}$ if we have 
$x_ix_j = -1$, $w_{ij}$ is the weight of the constraint for $x_i$ and $x_j$. Note that matrix $A$ is an $n\times n$ symmetric matrix with zero diagonal. We remark that  Quadratic Program~(\ref{eq:max-cut-qp}) was used by Goemans and Williamson~\cite{GW95} in their seminal paper on semi-definite programming algorithm for Max Cut.

In this section, we give an algorithm for Max Cut that finds a solution of value $(1-O(1/\sqrt{\Delta}\cdot \varepsilon^{-1})) \, OPT$, where $OPT$ is the value of the optimal solution, $\varepsilon$ is the parameter of our model, and $\Delta = 2m/n$ is the average vertex degree in the graph. Our algorithms give a nontrivial approximation when $\Delta \geq C/\varepsilon^2$.
Note that matrix $A=(a_{ij})$ is symmetric with a zero diagonal.

The value of quadratic form~(\ref{eq:max-cut-qp}) exactly
equals the number of satisfied constraints in the corresponding 
instance of Max 2-Lin. Indeed, if the constraint for $x_i$ and $x_j$ is satisfied, then the term
$|a_{ij}| + a_{ij}y_iy_j = 
2|a_{ij}|$; if it is not satisfied, then $|a_{ij}| + a_{ij}y_iy_j = 0$. Consequently,
$$\frac{|a_{ij}| + a_{ij}x_ix_j + 
|a_{ji}| + a_{ji}x_ix_j}{4}
= w_{ij},$$
if the constraint for $x_i$ and $x_j$ is satisfied; and 
$$\frac{|a_{ij}| + a_{ij}x_ix_j + 
|a_{ji}| + a_{ji}x_ix_j}{4}
= 0,$$
if the constraint is not satisfied. Since the sum of all $|a_{ij}|$ equals $2W$, where $W$ is the total weight of all constraints, quadratic form~(\ref{eq:max-cut-qp}) equals
\begin{equation}\label{eq:max-cut-qp-prime}
\frac{W}{2} + \sum_{i,j} \frac{a_{ij}x_ix_j}{4}.
\end{equation}

In the next section, we show how to obtain a solution of value $OPT - \varepsilon^{-1}\sqrt{n}\|A\|_F$ for Max QP (see Theorem~\ref{thm:q-max}). Here, $\|A\|_F$ is the Frobenius norm of $A$. This result implies Theorem~\ref{thm:Max2Lin} and also 
Theorem~\ref{thm:Max2Lin-unweighted}, because $OPT \geq W/2$.

\section{Quadratic Forms with Advice} 
In this section, we consider the quadratic form maximization problem with advice. Let $A=(a_{ij})$ be a symmetric $n\times n$ matrix. Our goal is to maximize the quadratic
form $\sum_{ij} a_{ij}x_ix_j$ for $x\in\{-1,1\}^n$. This quadratic form can also be written as 
$\langle x, Ax\rangle$.
We assume that the ground truth solution is $x^*$. The algorithm receives advice $\tilde{x}$. Each $\tilde{x}_i$ is a random variable, $\tilde{x}_i = x^*_i$ with probability $(1+\varepsilon)/2$ and $\tilde{x}_i = -x^*_i$ with probability $(1-\varepsilon)/2$. All variables $\tilde{x}_i$ are independent. The main result of this section is the following theorem.

\begin{theorem}				\label{thm:q-max}
	Let $A \in \mathbbm{R}^{n \times n}$ be a symmetric matrix with zero diagonal entries. Then, there exists a deterministic polynomial time algorithm that finds a labeling $x'$ such that randomizing over the choice of advice $\tilde{x}$, we have that 
\begin{equation}\label{eq:thm:q-max}
\Ex_{\tilde{x}}\left[\langle x', A x' \rangle\right] \geq \langle x^*, A x^* \rangle - \epsilon^{-1}\sqrt{n}\|A\|_F.
\end{equation}
\end{theorem}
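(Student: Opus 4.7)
The plan is to consider the algorithm that, given $\tilde x$, outputs $x' \gets \arg\max_{x \in \{-1,1\}^n} \langle x, A\tilde x\rangle$, which has the closed-form $x'_i = \mathrm{sign}((A\tilde x)_i)$ (with ties broken deterministically). This choice is motivated by the fact that $\E[A\tilde x] = \varepsilon A x^*$, so $\langle x, A\tilde x\rangle$ serves as a noisy linear surrogate for $\langle x, A x^*\rangle$, a ``gradient-aligned'' objective for our quadratic problem.

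The first part of the analysis bounds $\E[\langle x', A x^*\rangle]$. By the optimality of $x'$ at the feasible point $x^*$, we have $\langle x' - x^*, A\tilde x\rangle \geq 0$. Decomposing $\tilde x = \varepsilon x^* + \eta$, where the coordinates of $\eta$ are independent with mean zero and variance $1 - \varepsilon^2 \leq 1$, this becomes $\varepsilon\langle x' - x^*, A x^*\rangle \geq -\langle x' - x^*, A\eta\rangle$. Taking expectations and using $\E[\langle x^*, A\eta\rangle] = 0$, it suffices to bound $|\E[\langle x', A\eta\rangle]|$. For each coordinate, $\E[|(A\eta)_i|] \leq \sqrt{\mathrm{Var}((A\eta)_i)} = \sqrt{\sum_j A_{ij}^2 \mathrm{Var}(\eta_j)} \leq \|A_i\|_2$, and since $|x'_i|=1$, we get
$$|\E[\langle x', A\eta\rangle]| \leq \sum_i \|A_i\|_2 \leq \sqrt{n}\, \|A\|_F$$
by Cauchy--Schwarz. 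Combining, $\E[\langle x', A x^*\rangle] \geq \mathrm{OPT} - \varepsilon^{-1}\sqrt{n}\,\|A\|_F$.

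To finish, I would combine this estimate with the polarization identity
$$\langle x', A x'\rangle = 2\langle x', A x^*\rangle - \mathrm{OPT} + \langle x' - x^*,\, A(x' - x^*)\rangle,$$
which reduces the theorem to lower-bounding $\E[\langle x' - x^*, A(x' - x^*)\rangle]$. This is the main technical obstacle: the quadratic term is automatically non-negative when $A \succeq 0$, and in that case the desired bound follows immediately (with at most a constant-factor loss). However, for a general symmetric zero-diagonal $A$ (for instance the minus-adjacency matrix of Max Cut, which has negative eigenvalues) this term can be negative, and a naive $\|A\|_{\mathrm{op}}\|x' - x^*\|_2^2 = O(n\|A\|_{\mathrm{op}})$ bound is far too weak. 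The hard part will be to close this gap without paying an extra factor of $\sqrt{n}$. To do so, I would modify the algorithm to round a matrix of the form $M\tilde x$, where $M \succeq 0$ with $M_{ii} = 1$ is the optimum of the SDP relaxation $\max\{\langle A, M\rangle\}$ (so that $\langle A, M\rangle \geq \mathrm{OPT}$). In this case $(M\tilde x)_i$ concentrates around $\varepsilon(M x^*)_i$ with noise governed by $(M^2)_{ii}$, and the PSD structure of $M$ should allow the self-interaction term to be controlled by a second application of the same Cauchy--Schwarz/Jensen argument. Identifying the exact rounding and the matching quadratic estimate is the crux of the proof.
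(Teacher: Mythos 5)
There is a genuine gap, and you have correctly located it yourself: your argument establishes a lower bound on $\Ex[\langle x', Ax^*\rangle]$, but the theorem asks for $\Ex[\langle x', Ax'\rangle]$, and the cross term $\langle x'-x^*, A(x'-x^*)\rangle$ in your polarization identity has no sign for a symmetric zero-diagonal $A$ (such a matrix always has negative eigenvalues unless it is zero, so the PSD case never actually occurs here). The SDP-based repair you sketch is not worked out and it is unclear it can be made to work; as written, the proof does not close.

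The paper sidesteps this obstacle entirely by changing the surrogate objective rather than the rounding. Instead of maximizing the bare linear proxy $\langle x, A\tilde x\rangle$, it maximizes $F(x,\tilde x) = \langle x, A\tilde x\rangle - \|A(\epsilon x - \tilde x)\|_1$ over the solid cube $x\in[-1,1]^n$ (this is concave in $x$, hence tractable). The subtracted $\ell_1$ penalty is exactly what your approach is missing: by H\"older, $|\langle x, A(\epsilon x - \tilde x)\rangle|\le \|A(\epsilon x-\tilde x)\|_1$ for $x\in[-1,1]^n$, which gives the \emph{pointwise, deterministic} inequality $\langle x, Ax\rangle \ge F(x,\tilde x)/\epsilon$ for every $x$ and every realization of the advice. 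So the maximizer's objective value directly certifies a lower bound on the quadratic form itself, and no comparison between $\langle x',Ax'\rangle$ and $\langle x',Ax^*\rangle$ is ever needed. The only probabilistic step is showing that the penalty is cheap at the ground truth: $\Ex\|A(\epsilon x^*-\tilde x)\|_1 \le \sqrt{n}\,\|A\|_F$, which is the same Cauchy--Schwarz/Jensen computation you already carried out for $\Ex\|A\eta\|_1$. Finally, the fractional maximizer is converted to a $\{\pm1\}$ point by greedy coordinate-wise rounding, which never decreases $\langle x,Ax\rangle$ because the form is multilinear (zero diagonal). If you want to salvage your write-up, replace your linear objective with $F$ and reuse your variance bound; the first half of your analysis then becomes essentially the paper's Lemma on $\Ex[F(x,\tilde x)]$.
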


The algorithm for the theorem is described below.

\begin{figure}[ht!]
\begin{mdframed}
\noindent{\bf Algorithm}

\medskip

\noindent {\bf Input:} Coefficient matrix $A$, oracle advice $\tilde{x}_1,\ldots,\tilde{x}_n \in \{\pm 1\}$.

\noindent {\bf Output:} Solution $x'_1,\dots,x'_n$.

\medskip

\begin{enumerate}
    \item Define $F(x,y) := \langle x, Ay \rangle - \|A(\epsilon x - y)\|_1$.
    \item Solve the following mathematical program with a concave objective:
    \begin{eqnarray}\label{eq:concave-MP}
        \textbf{Maximize} & F(x,\tilde{x})& \\
        \textbf{Subject to}& x_i \in [-1,1] & \forall~i \in [n]
    \end{eqnarray}
    \item Round the fractional solution $x$ coordinate-by-coordinate to a solution $x'\in\{\pm 1\}^n$ such that $\langle x', A x' \rangle \geq \langle x, A x \rangle$.
    \item Output labeling $x'$.
\end{enumerate}
\end{mdframed}
\caption{Quadratic Program Maximization Algorithm}
\label{fig:quad-max}
\end{figure}

\begin{proof}
First, observe that function $F(x,y)$ is a concave function of $x$ for a fixed $y$, since $\langle x, Ay \rangle$ is a linear function of $x$, and $\|A(\epsilon x - y)\|_1$ is a convex function (because all vector norms are convex). Thus, we can find the maximum of $F(x,\tilde{x})$ subject to the constraint $x\in[-1,1]^n$ in polynomial time. Also, note that $\langle x, Ax\rangle$ is a linear function of each $x_i$ when all other coordinates $x_j$ ($j\neq i$) are fixed. Thus, the algorithm can round each $x$ to a $x'\in\{\pm 1\}$ by rounding coordinates one-by-one. At every step, the algorithm replaces one coordinate $x_i\in[-1,1]$ with $-1$ or $+1$ making sure that the quadratic form $\langle x, Ax\rangle$ does not decrease. 

We now show that inequality~(\ref{eq:thm:q-max}) holds. We begin with the following claim.

\begin{claim}
For every $x,y \in [-1,1]^n$, we have
	\[
	\langle Ax, x \rangle \geq \frac{F(x,y)}{\epsilon}.
	\]
\end{claim}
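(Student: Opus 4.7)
The plan is to show the inequality by direct algebraic manipulation, reducing it to a one-line application of Hölder's inequality.

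First, I would multiply both sides by $\epsilon>0$ to clear the denominator, yielding the equivalent inequality
\[
\epsilon \langle x, Ax\rangle \;\geq\; \langle x, Ay\rangle - \|A(\epsilon x - y)\|_1.
\]
Rearranging, this is the same as
\[
\|A(\epsilon x - y)\|_1 \;\geq\; \langle x, Ay\rangle - \epsilon\langle x, Ax\rangle \;=\; \langle x,\, A(y - \epsilon x)\rangle \;=\; -\langle x,\, A(\epsilon x - y)\rangle.
\]
So the entire claim reduces to showing that for the single vector $v := A(\epsilon x - y) \in \mathbb{R}^n$, one has $\langle x, v\rangle \geq -\|v\|_1$.

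Next, I would invoke the elementary bound: for any $x \in [-1,1]^n$ and any $v \in \mathbb{R}^n$,
\[
|\langle x, v\rangle| \;\leq\; \|x\|_\infty \cdot \|v\|_1 \;\leq\; \|v\|_1,
\]
which is just Hölder's inequality combined with the constraint $\|x\|_\infty \leq 1$. In particular $\langle x, v\rangle \geq -\|v\|_1$, which is exactly what we needed. This closes the argument.

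There is essentially no obstacle here: the claim is designed so that the $\ell_1$-penalty term $\|A(\epsilon x - y)\|_1$ in $F$ exactly absorbs the cross term $\langle x, A(y - \epsilon x)\rangle$ via Hölder. The only thing to be careful about is the sign bookkeeping when moving terms across the inequality and the fact that $\|x\|_\infty \le 1$ on the feasible region of the mathematical program. In other words, the penalty $-\|A(\epsilon x - y)\|_1$ was chosen precisely so that this claim becomes a tautology at the level of Hölder's inequality, which is why it works as a tractable concave surrogate for the non-concave objective $\epsilon \langle x, Ax\rangle$.
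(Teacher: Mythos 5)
Your proof is correct and follows essentially the same route as the paper: both rest on the identity $\epsilon\langle x, Ax\rangle = \langle x, Ay\rangle + \langle x, A(\epsilon x - y)\rangle$ and then bound the cross term via H\"older's inequality using $\|x\|_\infty \le 1$. The only difference is presentational (you rearrange the inequality rather than writing a forward chain), so there is nothing to add.
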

\begin{proof}
Write:
\[
	\langle x, Ax \rangle = \frac{\langle x, A y \rangle + \langle x , A(\epsilon x - y) \rangle}{\epsilon} \geq \frac{\langle x, A y \rangle - \|A (\epsilon x - y)\|_1}{\epsilon} = \frac{F(x,y)}{\epsilon},
\]
where the inequality follows from H\"older's inequality:
$$|\langle x , A(\epsilon x - y) \rangle | \leq
\|A(\epsilon x - y)\|_1.
$$
\end{proof}

Next, we bound the expected value of the optimization program~(\ref{eq:concave-MP})

\begin{lemma}
Let $x \in [-1,1]^n$ denote the optimal solution of concave program~(\ref{eq:concave-MP}). Then,
\[	\Ex_{\tilde{x}}\left[F(x,\tilde{x})\right] \geq \epsilon\langle x^*, A x^* \rangle - \sqrt{n}\,\|A\|_F. 
\]
\end{lemma}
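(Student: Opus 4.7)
The plan is to use the feasibility of $x^*$ in the concave program. Since $x^* \in \{\pm 1\}^n \subseteq [-1,1]^n$ and $x$ is the optimizer of $F(\cdot, \tilde x)$ over $[-1,1]^n$, we have $F(x, \tilde x) \geq F(x^*, \tilde x)$ pointwise in $\tilde x$. Taking expectations, it suffices to lower bound $\Ex_{\tilde x}[F(x^*, \tilde x)] = \Ex_{\tilde x}[\langle x^*, A\tilde x\rangle] - \Ex_{\tilde x}[\|A(\epsilon x^* - \tilde x)\|_1]$.

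The first term is easy: since each $\tilde x_i$ is independent with $\Ex[\tilde x_i] = \tfrac{1+\epsilon}{2}x_i^* - \tfrac{1-\epsilon}{2}x_i^* = \epsilon x_i^*$, linearity gives $\Ex[\langle x^*, A\tilde x\rangle] = \epsilon\,\langle x^*, A x^*\rangle$. For the penalty term, I would introduce the centered noise vector $z := \tilde x - \epsilon x^*$, whose coordinates are independent and mean zero. A direct computation shows $\Ex[z_i^2] = \tfrac{1+\epsilon}{2}(1-\epsilon)^2 + \tfrac{1-\epsilon}{2}(1+\epsilon)^2 = 1-\epsilon^2 \leq 1$.

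Now I chain two inequalities to control $\Ex[\|Az\|_1]$. First, by Cauchy--Schwarz in $\mathbbm{R}^n$, $\|Az\|_1 \leq \sqrt{n}\,\|Az\|_2$. Second, by Jensen's inequality applied to the concave function $\sqrt{\cdot}$,
\[
\Ex[\|Az\|_2] \leq \sqrt{\Ex[\|Az\|_2^2]} = \sqrt{\Ex[z^\top A^\top A z]}.
\]
Independence and mean-zero of the coordinates of $z$ kill the off-diagonal terms: $\Ex[z^\top A^\top A z] = \sum_{i} \Ex[z_i^2]\,(A^\top A)_{ii} \leq \mathrm{tr}(A^\top A) = \|A\|_F^2$. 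Combining these gives $\Ex[\|A(\epsilon x^* - \tilde x)\|_1] \leq \sqrt{n}\,\|A\|_F$, which plugged back yields exactly the claimed bound $\Ex[F(x,\tilde x)] \geq \Ex[F(x^*,\tilde x)] \geq \epsilon\langle x^*,Ax^*\rangle - \sqrt{n}\|A\|_F$.

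I do not anticipate a serious obstacle: the concavity of the program is already established and the rest is second-moment arithmetic. The only potentially delicate point is making sure the variance calculation is driven by the independence of the coordinates of $z$ (so that cross-terms vanish) rather than by any structure of $A$; the zero-diagonal hypothesis on $A$ in Theorem~\ref{thm:q-max} plays no role in this particular lemma, which is why the bound is in terms of $\|A\|_F$ rather than the Frobenius norm of the off-diagonal part.
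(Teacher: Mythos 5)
Your proof is correct and follows essentially the same route as the paper's: lower-bound $F(x,\tilde x)$ by $F(x^*,\tilde x)$ via feasibility, use $\Ex\tilde x = \epsilon x^*$ for the linear term, and control $\Ex\|A(\epsilon x^* - \tilde x)\|_1$ by Cauchy--Schwarz, Jensen, and the second-moment computation $\Ex[z_i^2]=1-\epsilon^2$ with off-diagonal cross-terms vanishing by independence. Your closing observation that the zero-diagonal hypothesis is not needed for this lemma (it is used only for the coordinate-wise rounding step) is also accurate.
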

\begin{proof}
The ground truth solution $x^*$ is always a feasible solution to program~(\ref{eq:concave-MP}). Hence, we have
$$
\Ex_{\tilde{x}}\left[F(x,\tilde{x})\right]	\geq \Ex_{\tilde{x}}\big[F(x^*,\tilde{x})\big]
= \Ex_{\tilde{x}}\big[\langle x^*, A\tilde{x} \rangle - \|A(\epsilon x^* - \tilde{x})\|_1\big] 
= \epsilon \langle x^*, Ax^* \rangle - \Ex_{\tilde{x}}\big[\|A(\epsilon x^* - \tilde{x})\|_1\big]. 
$$
Here, we used that $\Ex \tilde{x} = \varepsilon x^*$. 
To bound the last term, we let $z = \epsilon x^* - \tilde{x}$ and observe that 
$\|Az\|_1 \leq \sqrt{n}\,\|Az\|_2$ by the 
Cauchy–Schwarz inequality (since $Az$ is an $n$-dimensional vector). Then,
$$
\Ex\|A(\epsilon x^* - \tilde{x})\|_1 =
\Ex\|Az\|_1
\leq \sqrt{n}\,\Ex\|Az\|_2.$$
By Jensen's inequality,
$$
\sqrt{n}\,\Ex\|Az\|_2
\leq \sqrt{n}\,\Ex\Big[\|Az\|^2_2\Big]^{1/2} = 
\sqrt{n}\,\Ex\Big[
\langle Az, Az
\rangle\Big]^{1/2}
=
\sqrt{n}\,\Ex\Big[
\langle z, A^*Az
\rangle\Big]^{1/2} = 
\sqrt{n}\,\Ex\Big[\sum_{ij}
(A^*A)_{ij} z_iz_j\Big]^{1/2}.
$$
Random variables $z_i$ are 
mutually independent and $\Ex[z_i]=0$ for all $i$. Thus,
$\Ex[z_i z_j] = 0$ if $i\neq j$.
Also, $\Ex[z_i^2] = \Var[\tilde{x}_i] = 1-\varepsilon^2$. Therefore,
$$\Ex\Big[\sum_{ij}
(A^*A)_{ij} z_iz_j\Big] =
(1-\varepsilon^2)\sum_i 
(A^*A)_{ii} =  (1-\varepsilon^2) \,\text{tr}(A^*A) = (1-\varepsilon^2)\|A\|^2_F.
$$
Putting the bounds together completes the proof.
\end{proof}

We are now ready to finish the proof of Theorem \ref{thm:q-max}. Recall that $x' \in \{-1,1\}^n$ is the integral solution obtained by greedy coordinate-wise rounding of $x$. Thus,
\[
\Ex_{\tilde{x}}\left[\langle x', Ax' \rangle \right]
\geq \Ex_{\tilde{x}}\left[\langle x, A x \rangle\right] 
\geq \Ex_{\tilde{x}}\left[\frac{F(x,\tilde{x})}{\epsilon}\right] \geq \frac{\epsilon \langle x^*, A x^* \rangle - \sqrt{n}\|A\|_F}{\epsilon} 
= \langle x^*, A x^* \rangle - \sqrt{n}\epsilon^{-1}\|A\|_F.
\]

\end{proof}

\section{Algorithm for Max 3-Lin with Advice}


\alg*

\noindent\textbf{Remark:} Instance $\Phi$ must have at least $\ln \frac{1}{\delta} \cdot \delta^{-1} \varepsilon^{-6}\cdot n$ constraints. This means that every variable should participate in $3\delta^{-1}\ln \frac{1}{\delta} \cdot \varepsilon^{-6}$ constraints on average.
\begin{proof}
Our algorithm works as follows. First, it creates a new \emph{weighted} instance $\Psi$ of Max 2-Lin. This instance has the same set of variables, $x_1,\dots,x_n$, as Max 3-Lin instance $\Phi$ but a different set of constraints, which are created by functions \textsc{Create-H-Constraints} and \textsc{Create-L-Constraints} (see below). Then, the algorithm solves the Max 2-Lin instance using the Goemans and Williamson algorithm  for MAX CUT~\cite{GW95} and obtains a solution $\hat{x}_1,\dots,\hat{x}_n$. Finally, it outputs 
$\hat{x}_1,\dots,\hat{x}_n$ as a solution to the original  Max 3-Lin instance. We now describe functions \textsc{Create-H-Constraints} and \textsc{Create-L-Constraints}.

Fix a threshold $t = 8\varepsilon^{-2}\ln \nicefrac{1}{\delta}$. Denote by $E(\Phi)$ the set of indices of all constraints in $\Phi$ i.e., for every constraint $x_i x_j x_j=c_{ijk}$, set $E(\Phi)$ contains an \emph{unordered} tuple $(i,j,k)$.
Let $E_{ij}=\{(i,j,k) \in E(\Phi)\}$ be the set of indices of all constraints in $\Phi$ that contain variables $x_i$ and $x_j$. We say that set $E_{ij}$ is heavy if it contains at least $t$ elements. If a constraint belongs to at least one heavy set, we call it heavy. In other words, constraint 
$x_ix_jx_k = c_{ijk}$ is heavy if 
$E_{ij}$, $E_{jk}$, or $E_{ik}$ is heavy. If a constraint is not heavy, we call it light.

For every \emph{heavy} set $E_{ij}$, 
function \textsc{Create-H-Constraints} creates $2|E_{ij}|$ constraints in $\Psi$. It first computes $$\sigma_{ij} = 
\operatorname{sgn}\Big(\sum_{k:(i,j,k)\in E_{ij}}c_{ijk}\tilde{x}_k\Big)$$
and then creates constraints $x_ix_j = \sigma_{ij}$ and
$x_k = \sigma_{ij}c_{ijk}$ in $\Psi$
for each constraint $x_ix_jx_k = c_{ijk}$ in $E_{ij}$. We say that these constraints are representatives for
$x_ix_jx_k = c_{ijk}$ in the new instance $\Psi$. 
Note that the first constraint ($x_ix_j = \sigma_{ij}$) is identical for all constraints in 
$E_{ij}$ and does not depend on $k$. However, the second constraint ($x_k = \sigma_{ij}c_{ijk}$) does depend on~$k$.  Strictly speaking, if $\sigma_{ij}=0$, constraint $x_i x_j = \sigma_{ij}$ is not  a valid constraint because the right hand side must be either $1$ or $-1$. To make this constraint valid, we replace $\sigma_{ij}$ with $1$ but nevertheless, we conservatively assume that this constraint is always violated.

For every variable $x_i$, function \textsc{Create-L-Constraints} finds all light constraints that contain $x_i$. Denote this set by $L_i$. Then, for each constraint 
$x_ix_jx_k = c_{ijk}$ in $L_i$, \textsc{Create-L-Constraints} creates a constraint 
$x_i=\sigma_i$, where
$$\sigma_{i} = 
\operatorname{sgn}\Big(\sum_{j,k:(i,j,k)\in L_{i}}c_{ijk}\tilde{x}_j\tilde{x}_k\Big).$$
We say that this constraint is a representative for $x_ix_jx_k = c_{ijk}$ in the new instance $\Psi$. Note that 
\textsc{Create-L-Constraints} creates identical constraints for all constraints in $L_i$. If $\sigma_i = 0$, then, as before, we replace $\sigma_i$ with $1$ but will assume that $x_i=\sigma_i$ is violated for every solution $x$.

\medskip

\noindent\textbf{Analysis.} We have completely described the algorithm and now proceed to the analysis. It is clear that our algorithm runs in polynomial time. We prove that the expected fraction of satisfied constraints is $1-O(\sqrt{\delta})$.
We first state the main technical lemma.

\begin{lemma}\label{lem:soln-Psi}
The ground truth solution $x^*_1,\dots,x^*_n$ to $\Phi$ satisfies a $(1-O(\delta))$ fraction of all the constraints in $\Psi$, in expectation.
\end{lemma}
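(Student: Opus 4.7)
The plan is to bound the expected number of representative constraints in $\Psi$ that are unsatisfied by $x^*$ and then divide by $|\Psi| = \Theta(|E(\Phi)|)$. A case-by-case look at the construction shows that, for a heavy pair $(i,j)$, its $2|E_{ij}|$ representatives in $\Psi$ are satisfied by $x^*$ exactly when $\sigma_{ij} = x_i^* x_j^*$ and the underlying constraint $(i,j,k)$ is itself satisfied by $x^*$ in $\Phi$ (the ``first-type'' representative $x_ix_j=\sigma_{ij}$ depends only on $\sigma_{ij}$, while the ``second-type'' representative $x_k = \sigma_{ij} c_{ijk}$ additionally requires the original constraint to be satisfied). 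Similarly, each representative $x_i=\sigma_i$ from the light construction is satisfied by $x^*$ iff $\sigma_i = x_i^*$. So the task reduces to showing that $\sigma_{ij} = x_i^* x_j^*$ and $\sigma_i = x_i^*$ hold with sufficiently high probability.

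For the heavy case, I would analyze $\sum_{k:(i,j,k)\in E_{ij}} c_{ijk}\tilde{x}_k$ as a sum of independent $\pm 1$ random variables. Writing $\delta_{ij}$ for the fraction of constraints in $E_{ij}$ that $x^*$ fails, a direct moment computation gives $\Ex\bigl[\sum_k c_{ijk}\tilde{x}_k\bigr] = \varepsilon(1-2\delta_{ij})|E_{ij}| x_i^* x_j^*$. Hoeffding's inequality combined with the heavy threshold $|E_{ij}| \geq t = 8\varepsilon^{-2}\ln(1/\delta)$ then yields $\Pr[\sigma_{ij} \neq x_i^* x_j^*] \leq 2\delta$ whenever $\delta_{ij} \leq 1/4$. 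Pairs with $\delta_{ij} > 1/4$ can be charged to unsatisfied 3-Lin constraints via $|E_{ij}| \leq 4\delta_{ij}|E_{ij}|$ and the observation that each 3-Lin constraint lies in at most three pairs; their total contribution is therefore $O(\delta|E(\Phi)|)$.

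For the light case, I would analyze $S_i = \sum_{(i,j,k)\in L_i} c_{ijk}\tilde{x}_j\tilde{x}_k$ via Chebyshev's inequality, since the terms $\tilde{x}_j\tilde{x}_k$ are no longer independent. A moment calculation gives $\Ex[S_i] = \varepsilon^2(1-2\delta_i)|L_i| x_i^*$ (with $\delta_i$ the fraction of $L_i$ unsatisfied by $x^*$), while the covariance of two terms sharing exactly one variable has magnitude $\varepsilon^2(1-\varepsilon^2)$ and is zero between disjoint terms. The crucial input is that lightness guarantees $|E_{ij}|,|E_{ik}| < t$ for every $(i,j,k)\in L_i$, so each term correlates with fewer than $2t$ others; this yields $\Var[S_i] = O(|L_i|(1+t\varepsilon^2)) = O(|L_i|\ln(1/\delta))$. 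Chebyshev then gives $\Pr[\sigma_i \neq x_i^*] = O(\ln(1/\delta)/(\varepsilon^4|L_i|))$ whenever $\delta_i \leq 1/4$, so the expected number of bad representatives at $i$ is $O(\ln(1/\delta)/\varepsilon^4)$. Summing over all $i$ gives $O(n\ln(1/\delta)/\varepsilon^4) = O(\delta\varepsilon^2|E(\Phi)|)$ by the hypothesis on $|E(\Phi)|$, while variables with $\delta_i > 1/4$ are absorbed by the same charging argument as in the heavy case.

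Combining the heavy and light contributions, the expected number of representatives unsatisfied by $x^*$ is $O(\delta|E(\Phi)|)$, an $O(\delta)$ fraction of the $\Theta(|E(\Phi)|)$ representatives in $\Psi$. The main obstacle I anticipate is the variance estimate in the light case: one must use the lightness condition $|E_{ab}| < t$ carefully to bound the number of correlated term pairs, since a cruder bound would not fit inside the degree hypothesis $|E(\Phi)| \geq \delta^{-1}\ln(1/\delta)\varepsilon^{-6} n$.
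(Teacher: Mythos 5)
Your proposal is correct and follows the paper's overall structure: the same heavy/light decomposition, the same charging of pairs (or variables) with more than a quarter of their constraints violated to $\Cost(\cdot,x^*)$ via the factor-of-three overcounting, and the same Hoeffding argument for the heavy sums $\sum_k c_{ijk}\tilde{x}_k$. The one genuine divergence is the light case: the paper controls $\Sigma_i=\sum_{(i,j,k)\in L_i}c_{ijk}\tilde{x}_j\tilde{x}_k$ with McDiarmid's bounded-differences inequality, using the lightness condition to bound each difference constant $\rho_s$ by $t$ and then $\sum_s\rho_s^2\le\|\rho\|_1\|\rho\|_\infty\le 2|L_i|t$, which yields $\Pr[\sigma_i\neq x_i^*]\le e^{-\varepsilon^4|L_i|/(16t)}$ and an expected cost of $O(t\varepsilon^{-4})=O(\ln(1/\delta)\,\varepsilon^{-6})$ per variable. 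You instead compute the second moment directly: the covariance of two terms sharing one advice variable is $\pm\varepsilon^2(1-\varepsilon^2)$, disjoint terms are uncorrelated, and lightness caps the number of correlated partners of each term at $2t$, giving $\Var[\Sigma_i]=O(|L_i|\,t\varepsilon^2)=O(|L_i|\ln(1/\delta))$ and, via Chebyshev, an expected cost of $O(\ln(1/\delta)\,\varepsilon^{-4})$ per variable. Both uses of lightness are the same combinatorial fact in different clothing; your Chebyshev bound has only a polynomial tail but that is all the expectation argument needs, and it in fact gives a slightly better per-variable constant (by a factor of $\varepsilon^{-2}$) than McDiarmid, so it fits comfortably inside the degree hypothesis $|E(\Phi)|\ge\delta^{-1}\ln(1/\delta)\,\varepsilon^{-6}n$ just as you claim. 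The only cosmetic caveat is your opening sentence, which states that the heavy representatives are satisfied ``exactly when'' both $\sigma_{ij}=x_i^*x_j^*$ and the underlying constraint holds; as your parenthetical already notes, the first-type representative needs only the former, but since you only use the implication in the direction of upper-bounding failures, nothing breaks.
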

We prove Lemma~\ref{lem:soln-Psi} in Section~\ref{sec:lem:soln-Psi}. Now, we show that Lemma~\ref{lem:soln-Psi} implies Theorem~\ref{thm:3LIN}.
First, observe that for every constraint in the original Max 3-Lin instance $\Phi$ our algorithm creates 2, 3, 4, or 6 constraints in Max 2-Lin instance $\Psi$: For each light constraint, it creates exactly 3 constraints; for each heavy constraint, it creates from 2 to 6 new constraints. Also, note that every constraint in $\Psi$ is a representative for exactly one constraint in $\Phi$.
By Lemma~\ref{lem:soln-Psi}, the optimal solution satisfies a $(1-O(\delta))$ fraction of the constraints in $\Psi$, in expectation. Thus, the Goemans--Williamson algorithm~\cite{GW95} finds a solution satisfying a $(1-O(\sqrt{\delta}))$ fraction of the constraints in $\Psi$, in expectation (see also Theorem 2 in survey~\cite{MM17})\footnote{The intermediate Max 2-Lin instance $\Psi$ may contain duplicate constraints, and hence the analysis counts the constraints with their multiplicities. However, this is fine since the result of \cite{GW95} holds for weighted instances as well.}. This means that at most $O(\sqrt{\delta}\,m)$ constraints in Max 2-Lin instance $\Psi$ are not satisfied by 
$\hat{x}_1,\dots,\hat{x}_n$, in expectation.

We now bound the number of constraints in $\Phi$ not satisfied by $\hat{x}_1,\dots,\hat{x}_n$. We separately consider heavy and light constraints. 
If $x_i x_j x_k = c_{ijk}$ is a heavy constraint, then at least one of the sets $E_{ij}$, $E_{jk}$, $E_{ik}$ must be heavy. Assume without loss of generality that $E_{ij}$ is heavy. Then, 
$x_i x_j x_k = c_{ijk}$ must have two representatives in $\Psi$: 
$x_i x_j = \sigma_{ij}$ and 
$x_k = \sigma_{ij} c_{ijk}$. If both of these constraints are satisfied in $\Psi$ by 
$\hat{x}_1,\dots,\hat{x}_n$, then 
$x_i x_j x_k = c_{ijk}$ is also satisfied, because
$$\hat{x}_i \hat{x}_j \hat{x}_k = (\hat{x}_i \hat{x}_j) \hat{x}_k = 
\sigma_{ij}(\sigma_{ij} c_{ijk}) = (\sigma_{ij})^2 c_{ijk}= c_{ijk}.$$
Therefore, if a heavy constraint in $\Phi$ is not satisfied by 
$\hat{x}_1,\dots,\hat{x}_n$, then one of its representatives is not satisfied by $\hat{x}_1,\dots,\hat{x}_n$. Since every constraint in $\Psi$ is a representative for exactly one constraint in $\Phi$, the number of unsatisfied heavy constraints is upper bounded by the number of unsatisfied constraints in $\Psi$, which is at most $O(\sqrt{\delta} m)$.

Suppose $x_i x_j x_k = c_{ijk}$ is a light constraint. We claim that 
this constraint is satisfied by 
$\hat{x}_1,\dots \hat{x}_n$
if (1) it is satisfied by the ground truth solution $x^*_1,\dots x^*_n$ and (2) its representatives -- constraints
$x_i = \sigma_i$,
$x_j = \sigma_j$,
$x_k = \sigma_k$ -- are satisfied 
by both 
$x^*_1,\dots x^*_n$
and 
$\hat{x}_1,\dots \hat{x}_n$. Indeed, in this case, we have
$\hat{x}_i = \sigma_i = x^*_i$,
$\hat{x}_j = \sigma_j = x^*_j$,
$\hat{x}_k = \sigma_k = x^*_k$, and 
$\hat x_i \hat x_j \hat x_k = 
x^*_i x^*_j x^*_k = c_{ijk}$. Therefore, if $x_i x_j x_k = c_{ijk}$ is not satisfied by 
$\hat{x}_1,\dots \hat{x}_n$, then (1) it is also not satisfied in the ground truth solution; or (2) one of its representatives is not satisfied by $x^*_1,\dots x^*_n$ or $\hat{x}_1,\dots \hat{x}_n$.
Consequently, the number of light constraints  unsatisfied by $\hat{x}_1,\dots \hat{x}_n$ is upper bounded by the number of light  constraints unsatisfied by $x^*_1,\dots x^*_n$ plus the number of constraints in $\Psi$ not satisfied by either $x^*_1,\dots x^*_n$  or 
$\hat{x}_1,\dots \hat{x}_n$. The total number of such constraints is 
$O(\delta m) + O(\delta m) +O(\sqrt{\delta}m)= O(\sqrt{\delta}m)$. This concludes the proof of Theorem~\ref{thm:3LIN}.
\end{proof}
\subsection{Proof of Lemma~\ref{lem:soln-Psi}}\label{sec:lem:soln-Psi}
In this section, we prove Lemma~\ref{lem:soln-Psi}.
\begin{proof}[Proof of Lemma~\ref{lem:soln-Psi}]
We first upper bound the expected number of representatives for heavy constraints violated by the ground truth solution $x^*$. Let $R_{ij}$ be the set of representative constraints for constraints in $E_{ij}$. In other words, $R_{ij}$ is the set of constraints created by function \textsc{Create-H-Constraints} for set $E_{ij}$. Let $\Cost(E_{ij},x)$ and 
$\Cost(R_{ij}, x)$ be the numbers of constraints violated by solution $x$ in 
$E_{ij}$ and $R_{ij}$, respectively. We claim that the following bound holds.
\begin{lemma}\label{lem:ineq:heavy}
For every \emph{heavy} set $E_{ij}$,
\begin{equation}\label{ineq:heavy}
\E[\Cost(R_{ij},x^*)] \leq
8 \Cost(E_{ij},x^*) + 2e^{-\varepsilon^2 t/8} |E_{ij}|.   
\end{equation}
\end{lemma}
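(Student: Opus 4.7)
The plan is to condition on the sign $\sigma_{ij}$, argue that when $\sigma_{ij}$ matches $\tau:=x^*_ix^*_j$ we ``pay'' only for the constraints in $E_{ij}$ already violated by $x^*$, and use Hoeffding to bound the probability of a wrong sign by $e^{-\varepsilon^2 t/8}$.

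First I would set up the bookkeeping. Let $\tau = x^*_i x^*_j\in\{\pm1\}$, let $s$ be the number of constraints in $E_{ij}$ satisfied by $x^*$, and $v=|E_{ij}|-s=\Cost(E_{ij},x^*)$. From $x^*_ix^*_jx^*_k=c_{ijk}$ one gets $c_{ijk}x^*_k=\tau$ when the original constraint is satisfied and $c_{ijk}x^*_k=-\tau$ when it is violated. Thus for the representative constraints in $R_{ij}$:
\begin{itemize}
\item The ``pair'' constraint $x_ix_j=\sigma_{ij}$, which appears $|E_{ij}|$ times, is satisfied by $x^*$ iff $\sigma_{ij}=\tau$.
\item The ``singleton'' constraint $x_k=\sigma_{ij}c_{ijk}$ is satisfied by $x^*$ iff $\sigma_{ij}=x^*_kc_{ijk}$, i.e.\ iff $\sigma_{ij}=\tau$ and the original constraint was satisfied, or $\sigma_{ij}=-\tau$ and the original was violated.
\end{itemize}
Combining this, I would deduce
\[
\Cost(R_{ij},x^*) = \begin{cases} v & \text{if }\sigma_{ij}=\tau,\\ |E_{ij}|+s & \text{if }\sigma_{ij}=-\tau,\\ \le 2|E_{ij}| & \text{if }\sigma_{ij}=0,\end{cases}
\]
so in particular
\[
\E[\Cost(R_{ij},x^*)] \;\le\; v \;+\; 2|E_{ij}|\cdot\Pr[\sigma_{ij}\ne \tau].
\]

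Next I would bound $\Pr[\sigma_{ij}\ne\tau]$ by a Hoeffding-type concentration bound. Define $Y_k := \tau\, c_{ijk}\,\tilde x_k\in\{\pm1\}$ for $k$ with $(i,j,k)\in E_{ij}$. Because $\tilde x_k$ is independent with $\E[\tilde x_k]=\varepsilon x^*_k$, one computes $\E[Y_k]=\varepsilon\cdot \tau c_{ijk}x^*_k$, which equals $+\varepsilon$ on satisfied triples and $-\varepsilon$ on violated triples; hence $\E[\sum_k Y_k]=\varepsilon(s-v)$. Since $\operatorname{sgn}(\sum_k c_{ijk}\tilde x_k)=\tau\operatorname{sgn}(\sum_k Y_k)$, the event $\{\sigma_{ij}\ne\tau\}$ is exactly $\{\sum_k Y_k\le 0\}$. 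Hoeffding's inequality with $Y_k\in[-1,1]$ yields
\[
\Pr[\sigma_{ij}\ne\tau] \;\le\; \exp\!\Big(-\tfrac{\varepsilon^2(s-v)^2}{2|E_{ij}|}\Big),
\]
assuming $s\ge v$ (if $s<v$ we are in the trivial case below).

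Finally I would do a two-case split to convert this into the claimed bound. If $v\ge |E_{ij}|/4$, then $\E[\Cost(R_{ij},x^*)]\le 2|E_{ij}|\le 8v=8\Cost(E_{ij},x^*)$, and (\ref{ineq:heavy}) is immediate. Otherwise $v<|E_{ij}|/4$, so $s-v>|E_{ij}|/2$, whence $(s-v)^2/|E_{ij}|\ge |E_{ij}|/4\ge t/4$ by heaviness $|E_{ij}|\ge t$; plugging into the Hoeffding bound gives $\Pr[\sigma_{ij}\ne\tau]\le e^{-\varepsilon^2 t/8}$, so
\[
\E[\Cost(R_{ij},x^*)] \le v + 2e^{-\varepsilon^2 t/8}|E_{ij}| \le 8\Cost(E_{ij},x^*) + 2e^{-\varepsilon^2 t/8}|E_{ij}|.
\]
The only delicate step is verifying the case distinction produces exactly the constants in (\ref{ineq:heavy}); the rest is routine. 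The main ``work'' of the lemma is the bookkeeping in the first step, making sure one correctly accounts for the $|E_{ij}|$ copies of the pair constraint in both the $\sigma_{ij}=\tau$ and $\sigma_{ij}=-\tau$ cases.
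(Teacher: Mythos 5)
Your proposal is correct and follows essentially the same route as the paper's proof: the same case split on whether $\Cost(E_{ij},x^*)\geq |E_{ij}|/4$, the same Hoeffding bound giving $\Pr[\sigma_{ij}\neq x_i^*x_j^*]\leq e^{-\varepsilon^2 t/8}$ via heaviness, and the same accounting of representatives (cost $v$ on the good sign event, at most $2|E_{ij}|$ otherwise). Your bookkeeping is in fact slightly more explicit than the paper's (you compute $\Cost(R_{ij},x^*)$ exactly in the $\sigma_{ij}=-\tau$ and $\sigma_{ij}=0$ cases and avoid the WLOG normalization), but the argument is the same.
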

\begin{proof}
If $\Cost(E_{ij},x^*)\geq |E_{ij}|/4$, then $8 \Cost(E_{ij},x^*) \geq 2|E_{ij}|$. Since $\Cost(R_{ij},x^*)\leq |R_{ij}| = 2|E_{ij}|$, bound (\ref{ineq:heavy}) holds. For the rest of the proof, we shall assume that 
$\Cost(E_{ij},x^*) < |E_{ij}|/4$. We upper bound the probability that 
$\sigma_{ij}\neq x^*_i x^*_j$. To simplify the exposition, let us assume that $x^*_ix^*_j = 
1$ (the proof for the case $x^*_ix^*_j = -1$ is analogous). Write
$$
\Pr(\sigma_{ij} \neq x^*_ix^*_j)=\Pr(\sigma_{ij} \neq 1) = \Pr\Big(\sum_{k:(i,j,k)\in E_{ij}}c_{ijk}\tilde{x}_k \leq 0\Big).
$$
The expectation of $\tilde{x}_k$ is $\varepsilon x^*_k$. Thus, 
$\E[c_{ijk}\tilde{x}_k] = \varepsilon c_{ijk} x^*_k$. For each constraint $x_ix_jx_k=c_{ijk}$
satisfied by $x^*$, we have 
$c_{ijk} = x^*_ix^*_jx^*_k = x^*_k$ and 
$c_{ijk} x^*_k = (x^*_k)^2 = 1$.
Since the number of unsatisfied constraints is upper bounded by $|E_{ij}|/4$, we have 
$$
\E\Big[\sum_{k:(i,j,k)\in E_{ij}}
c_{ijk} \tilde{x}_k \Big]=
\sum_{k:(i,j,k)\in E_{ij}} 
\varepsilon c_{ijk} x^*_k
\geq 
\varepsilon |E_{ij}|/2.
$$
By Hoeffding's inequality,
$$
\Pr(\sigma_{ij}\neq 1) =
\Pr\Big(\sum_{k:(i,j,k)\in E_{ij}}
c_{ijk} \tilde{x}_k\leq 0\Big)\leq 
e^{-\frac{2(\varepsilon |E_{ij}|/2)^2}{4|E_{ij}|}}
=
e^{-\frac{\varepsilon^2 |E_{ij}|}{8}}.
$$
Recall that $E_{ij}$ is a heavy set and thus contains at least $t$ elements. Hence, 
$\Pr(\sigma_{ij}\neq x^*_i x^*_j)\leq e^{-\varepsilon^2t/8}$.

If $\sigma_{ij}\neq x^*_i x^*_j$, then we bound the number of unsatisfied constraints by $|R_{ij}| = 2|E_{ij}|$. This gives us the second term in bound~(\ref{ineq:heavy}). If 
$\sigma_{ij} = x^*_i x^*_j$, then all constraints $x_ix_j = \sigma_{ij}$ in $\Psi$ are satisfied by $x^*$. A constraint $x_k = \sigma_{ij}c_{ijk}$ in $\Psi$ is not satisfied by $x^*$ if and only if $x^*_i x^*_jx^*_k \neq c_{ijk}$. In other words, $x_k = \sigma_{ij}c_{ijk}$ is not satisfied only if constraint $x_i x_j x_k = c_{ijk}$ is not satisfied by $x^*$. The number of such constraints is $\Cost(|E_{ij}|,x^*)$. It is upper bounded by the first term in~(\ref{ineq:heavy})
\end{proof}

We now upper bound the expected number of representatives for light constraints violated by the ground truth solution $x^*$. Let $R_{i}$ be the set of representative constraints for constraints in $L_{i}$. In other words, $R_{i}$ is set of constraints created by function \textsc{Create-L-Constraints} for set $L_{i}$.
Let $\Cost(L_{i},x)$ and $\Cost(R_{i}, x)$ be the number of constraints violated by solution $x$ in $L_{i}$ and $R_{i}$, respectively. We prove the following lemma.
\begin{lemma}\label{lem:ineq:light}
For every set $L_{i}$,
\begin{equation}\label{ineq:light}
\E[\Cost(R_{i},x^*)] \leq 4 \Cost(L_{i},x^*) + 
\frac{6t}{\varepsilon^4}.
\end{equation}
\end{lemma}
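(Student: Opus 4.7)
The plan is to observe first that the set $R_i$ consists of $|L_i|$ identical copies of the single constraint $x_i = \sigma_i$, so $\Cost(R_i, x^*) = |L_i|\cdot \mathbbm{1}[\sigma_i \neq x_i^*]$ and $\Ex[\Cost(R_i, x^*)] = |L_i|\cdot \Pr[\sigma_i \neq x_i^*]$. I would then split into two cases. If $\Cost(L_i, x^*) \geq |L_i|/4$, the bound is immediate because $\Ex[\Cost(R_i, x^*)] \leq |L_i| \leq 4\Cost(L_i, x^*)$. The real work is in the complementary case $\Cost(L_i, x^*) < |L_i|/4$, for which I assume WLOG $x_i^* = 1$ (the case $x_i^* = -1$ is analogous by flipping signs of all $c_{ijk}$).

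In this case, set $Z := \sum_{(i,j,k)\in L_i} c_{ijk}\tilde{x}_j\tilde{x}_k$, so $\sigma_i = \operatorname{sgn}(Z)$, and use Chebyshev's inequality to bound $\Pr[Z \leq 0]$. For the first moment, $\Ex[c_{ijk}\tilde{x}_j\tilde{x}_k] = \varepsilon^2 c_{ijk} x_j^* x_k^*$ equals $+\varepsilon^2$ on constraints satisfied by $x^*$ and $-\varepsilon^2$ on violated ones, yielding $\Ex[Z] = \varepsilon^2(|L_i| - 2\Cost(L_i, x^*)) \geq \varepsilon^2|L_i|/2$. For the variance, write $Z = \sum_\alpha X_\alpha$ with $X_\alpha = c_\alpha \tilde{x}_j\tilde{x}_k$. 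Each $\Var(X_\alpha) = 1-\varepsilon^4 \leq 1$. Two distinct terms whose $\{j,k\}$-pairs are disjoint are independent, contributing zero covariance; when they share exactly one vertex, a direct computation gives $|\Cov(X_\alpha, X_\beta)| = \varepsilon^2(1-\varepsilon^2) \leq \varepsilon^2$.

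The key use of the lightness hypothesis appears here: for every vertex $v$, the number $d_v$ of constraints in $L_i$ containing $v$ satisfies $d_v \leq |E_{iv}| < t$, since any such constraint is light. Combined with $\sum_v d_v = 2|L_i|$, the number of ordered pairs of distinct constraints in $L_i$ sharing a vertex is at most $\sum_v d_v(d_v-1) \leq t\sum_v d_v = 2t|L_i|$, so $\Var(Z) \leq |L_i| + 2t\varepsilon^2|L_i| = |L_i|(1 + 2t\varepsilon^2)$. Chebyshev's inequality then gives
\[
\Pr[\sigma_i \ne x_i^*] \le \Pr\bigl[|Z - \Ex Z| \ge \Ex Z\bigr] \le \frac{\Var(Z)}{(\Ex Z)^2} \le \frac{4(1+2t\varepsilon^2)}{\varepsilon^4|L_i|},
\]
and multiplying by $|L_i|$ yields $\Ex[\Cost(R_i, x^*)] \leq 4/\varepsilon^4 + 8t/\varepsilon^2 = O(t/\varepsilon^4)$, which absorbs into the claimed $6t/\varepsilon^4$ after adjusting constants (using $\varepsilon \leq 1$ and $t$ large as in the hypotheses of Theorem~\ref{thm:3LIN}). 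The main conceptual obstacle is the variance bound: the covariances from shared-vertex pairs must be controlled, and this is exactly where the lightness condition $d_v < t$ is essential, since without it the quadratic-form correlations could blow up and Chebyshev would not give a $t/\varepsilon^4$-shaped bound.
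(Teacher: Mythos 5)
Your proof is correct and reaches the same $O(t/\varepsilon^4)$ bound, but via a genuinely different concentration argument: you control $Z=\sum_{(i,j,k)\in L_i}c_{ijk}\tilde x_j\tilde x_k$ by a second-moment (Chebyshev) computation, whereas the paper applies McDiarmid's bounded-difference inequality to $Z$. Both arguments hinge on the same combinatorial consequence of lightness — each variable $\tilde x_v$ occurs fewer than $t$ times in $Z$ — which you use to bound the number of covarying (vertex-sharing) pairs by $2t|L_i|$ and the paper uses to bound $\sum_s\rho_s^2\le 2t|L_i|$. The paper's route yields an exponentially small failure probability $e^{-\varepsilon^4|L_i|/(16t)}$, which is then converted to $6t/\varepsilon^4$ via $ze^{-z}\le 1/e$; your route yields only the polynomial bound $O\bigl(t/(\varepsilon^4|L_i|)\bigr)$, but since the lemma only multiplies this probability by $|L_i|$, the two are interchangeable here, and yours is more elementary. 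Two small points: your final constant is $12$ rather than $6$ (e.g., $8t/\varepsilon^2\le 6t/\varepsilon^4$ can fail for $\varepsilon$ near $1$), which you acknowledge and which is harmless since the lemma is consumed inside an $O(\cdot)$; and your covariance analysis tacitly uses that two distinct constraints of $L_i$ share at most one vertex among $\{j,k\}$, which holds because constraints are distinct unordered triples all containing $i$ — worth a sentence if you write this up.
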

\begin{proof}
If $\Cost(L_i,x^*)\geq |L_i|/4$, then the desired bound holds since, in this case, $\Cost(R_{i},x^*)\leq |R_i| = |L_i|\leq 4 \Cost(L_{i},x^*)$. So, we shall assume that 
$\Cost(L_i,x^*) < |L_i|/4$. 

Recall that $R_{i}$ contains $|L_i|$ identical constraints $x_i=\sigma_i$. These constraints are not satisfied by $x^*$ if $x^*_i \neq \sigma_i$. Similarly to the proof of Lemma~\ref{ineq:heavy}, we shall assume that $x^*_i =1$ (the proof for the case $x^*_i =-1$ is  analogous). We have
$$
\Pr(\sigma_{i} \neq x^*_i)=
\Pr(\sigma_{i} \neq 1) = \Pr\Big(\sum_{j,k:(i,j,k)\in L_{i}}
c_{ijk}\tilde{x}_j\tilde{x}_k
\leq 0\Big).
$$
The expected value of each term 
$c_{ijk}\tilde{x}_j\tilde{x}_k$ is 
$\varepsilon^2 c_{ijk}x^*_jx^*_k$ since $\tilde{x}_j$ and $\tilde{x}_k$ are independent random variables with means $\varepsilon x^*_j$ and $\varepsilon x^*_k$, respectively. Thus,
$$
\E\Big[\sum_{j,k:(i,j,k)\in L_{i}}
c_{ijk} \tilde{x}_j\tilde{x}_k \Big]=
\sum_{j,k:(i,j,k)\in E_{ij}} 
\varepsilon^2 c_{ijk} x^*_jx^*_k\geq 
\varepsilon^2 |L_{i}|/2.
$$
Here, we used that for every satisfied constraint $x_ix_jx_k=c_{ijk}$, we have 
$c_{ijk} x^*_jx^*_k = 1$ and the assumption that $\Cost(L_i,x^*)\leq |L_i|/4$. We now use McDiarmid's bounded difference inequality~\cite{mcdiarmid1989method}. Let $\rho_s$ be the number of occurrences of variable $\tilde{x}_s$ in the sum 
$\Sigma_i = \sum_{j,k:(i,j,k)\in L_{i}}
c_{ijk} \tilde{x}_j\tilde{x}_k$. If we change the value of variable $\tilde{x}_s$ from $-1$ to $1$ or from $1$ to $-1$, then sum $\Sigma_i$ can change by at most $2\rho_s$, since all coefficients $c_{ijk}\in \{\pm 1\}$. Consequently, $\sigma_i$ satisfies the bounded differences property with constants $\rho_s$.
By McDiarmid's inequality,
$$\Pr(\sigma_i \neq 1) = 
\Pr\Big(\sum_{j,k:(i,j,k)\in L_{i}}
c_{ijk} \tilde{x}_j\tilde{x}_k \leq 0\Big)
\leq e^{-\frac{2(\varepsilon^2 |L_i|/2)^2}{\sum_s (2\rho_s)^2}} = 
e^{-\frac{\varepsilon^4 |L_i|^2}{8\sum_s \rho_s^2}}.
$$
The total number of random variables participating in the sum $\Sigma_i$ counted with repetitions is $2|L_i|$. 
The number of times $x_j$ appears in the sum is at most $|E_{ij}|$, which is at most $t$ if $E_{ij}$ is light. On the other hand, if $E_{ij}$ is heavy, then so is every $(i,j,k)\in E(\Phi)$, and thus $x_j$ does not appear in the sum at all. In either case, $x_j$ appears in the sum at most $t$ times.
Thus, by H\"older's inequality, 
$\sum_s \rho_s^2\leq \|\rho\|_1\,\|\rho\|_{\infty} \leq 2|L_i|\cdot t$. Therefore, 
$\Pr(\sigma_i \neq 1)\leq 
e^{-\frac{\varepsilon^4 |L_i|}{16t}}$ and 
$$\E[\Cost(R_{i},x^*)] = \Pr(\sigma_i \neq 1) |L_i| \leq e^{-\frac{\varepsilon^4 |L_i|}{16t}} |L_i| \leq \frac{6t}{\varepsilon^4}.$$
Here, we used inequality $e^{-z}z\leq 1/e$ for $z = \frac{\varepsilon^4 |L_i|}{16t}$.
\end{proof}

To finish the proof of Lemma~\ref{lem:soln-Psi}, we sum the bounds obtained in Lemmas~\ref{lem:ineq:heavy} and~\ref{lem:ineq:light} over all heavy and light sets. The total number of violated constraints in $\Psi$ is upper bounded in expectation by
$$\underbrace{\sum_{i,j} 8\Cost(E_{ij},x^*) +
\sum_i 4\Cost(L_i,x^*)}_{%
\leq 24 \delta |E(\Phi)|} 
+ 
\underbrace{\sum_{i,j} 2e^{-\varepsilon^2 t/8} |E_{ij}|}_{%
6\delta |E(\Phi)|} 
+ 
\underbrace{\sum_i \frac{6t}{\varepsilon^4}}_{6t\varepsilon^{-4}n}.
$$
The first two terms together are upper bounded by the total number of violated constraints in $\Phi$ times 24, because each violated constraint can belong to at most three sets $E_{ij}$ or three sets $L_i$. The third term is upper bounded by $6\delta |E(\Phi)|$ because $t = 8\varepsilon^{-2}\ln \nicefrac{1}{\delta}$ and $\sum_{ij}|E_{ij}|\leq 3 |E(\Phi)|$. Finally, the last term is upper bounded by $6\varepsilon^{-4}t n= 
6(8\varepsilon^{-2}\ln \nicefrac{1}{\delta})\cdot \varepsilon^{-4}\cdot n < 48\,\delta |E(\Phi)|$, because 
$|E(\Phi)| \geq \ln \nicefrac{1}{\delta}\cdot \delta^{-1}\varepsilon^{-6}\cdot n$.
\end{proof}

\section{Hardness of Max \texorpdfstring{$k$}{k}-Lin with Oracle Advice}

In this section, we prove our hardness results for Max $k$-Lin problems in the advice model, i.e., Theorem \ref{thm:3lin-hardness} and Theorem \ref{thm:4lin-hardness}. We build towards these results in several steps; we briefly outline them below:
\begin{itemize}
\item Firstly, in Section \ref{sec:eth-3lin}, we introduce our first ingredient, Lemma \ref{lem:lc-hardness}, where we show that $(1/2 + \epsilon)$-approximation for Max $3$-Lin on $n$-variables requires $2^{c_\epsilon n}$-running time, assuming ETH. We show this by combining the $2^{c'_\epsilon n}$-running time lower bound for $(1 - \epsilon,\epsilon)$-Gap Label Cover (Lemma \ref{lem:GapLabelCover-Hardness}), with H\r{a}stad's reduction from Label Cover to Max $3$-Lin (Theorem \ref{thm:3-lin}). 
\item Next, in Section \ref{sec:eth-4lin}, we show that we can transfer the above hardness for Max $3$-Lin to Max $4$-Lin instances with up to linear average degree, using a simple combinatorial reduction (Corollary \ref{corr:4-lin}).
\item Then in Section \ref{sec:oracle-advice}, we give a generic tool (Lemma \ref{lem:dist}), which shows that $2^{c_\epsilon n}$-time lower bound for a problem can be transferred to give polynomial time hardness for the same problem in the variable-subset advice model with parameter $c''_\epsilon$.
\item Finally, in Section \ref{sec:hardness-proof}, we combine the above ingredients to prove Theorems \ref{thm:3lin-hardness} and \ref{thm:4lin-hardness}.
\end{itemize}

\subsection{ETH and Hardness of Max 3-Lin}                      \label{sec:eth-3lin}

We first review the Exponential Time Hypothesis (ETH) and Linear PCP Conjecture and prove the following hardness result for Max 3-Lin.

\begin{lemma}					\label{lem:lc-hardness}
	Assume the ETH and Linear Size PCP Conjecture (see Conjecture~\ref{conj:lin-pcp}).
	For some absolute constants $C', C'',C''' > 0$, $\varepsilon_0\in (0,1/2)$, and $\eta(\epsilon) = C'/\sqrt{\log(1/\epsilon)}$, the following holds. For every $\varepsilon \in (0, \varepsilon_0)$, there is no algorithm that given a Max $3$-Lin instance $\cI$ on $n$ variables and $2^{O(1/\epsilon)^{C'''}} n$ constraints, distinguishes between the following cases:
	\[
	{\bf Yes~Case}: {\sf Val}(\cI) \geq 1 - \eta(\epsilon)	
	\qquad\qquad\textnormal{and}\qquad\qquad
	{\bf No~Case}: {\sf Val}(\cI) \leq 1/2 + \eta(\epsilon).
	\]
	in time $2^{2^{-(1/\epsilon)^{C''}}n} \cdot \poly(n)$.
\end{lemma}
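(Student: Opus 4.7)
The plan is to chain together two existing results: the ETH + Linear Size PCP-based subexponential lower bound for Gap Label Cover (Lemma~\ref{lem:GapLabelCover-Hardness}), and H\r{a}stad's long-code reduction from Label Cover to Max 3-Lin (Theorem~\ref{thm:3-lin}). The quantitative content of the statement then reduces to tracking how the instance size and gap parameter transform under H\r{a}stad's reduction.

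First I would apply Lemma~\ref{lem:GapLabelCover-Hardness} with a soundness parameter $\epsilon_{LC} = \epsilon_{LC}(\epsilon)$ chosen so that, once H\r{a}stad's reduction is applied, the resulting Max 3-Lin gap is exactly $(1 - \eta(\epsilon), 1/2 + \eta(\epsilon))$ with $\eta(\epsilon) = C'/\sqrt{\log(1/\epsilon)}$. Since H\r{a}stad's analysis produces soundness $1/2 + \eta$ whenever the starting Label Cover has soundness roughly $2^{-\Theta(1/\eta^2)}$, this calibration forces $\epsilon_{LC}$ to be a fixed power of $\epsilon$. Lemma~\ref{lem:GapLabelCover-Hardness} then yields an $N$-variable, alphabet-$R$ Label Cover instance with $R = R(\epsilon_{LC}) = (1/\epsilon)^{O(1)}$, on which no algorithm can solve the $(1-\epsilon_{LC}, \epsilon_{LC})$-gap in time $2^{c_1 N}$ for some absolute constant $c_1 > 0$.

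Next I would invoke Theorem~\ref{thm:3-lin} on this Label Cover instance. It produces a Max 3-Lin instance with $n = N \cdot 2^{O(R)}$ variables and $m = 2^{O(R)} \cdot n$ constraints, where the completeness/soundness gap is $(1 - \eta, 1/2 + \eta)$ as calibrated above. Composing the reduction with any hypothetical Max 3-Lin algorithm of running time $T(n)$ would give a Label Cover algorithm of running time $T(N \cdot 2^{O(R)}) + \poly(N \cdot 2^{O(R)})$, which by the first step must be at least $2^{c_1 N}$. Substituting $N = n/2^{O(R)}$ and $R = (1/\epsilon)^{O(1)}$ yields the Max 3-Lin lower bound $2^{2^{-(1/\epsilon)^{C''}} n}$, and simultaneously the constraint count becomes $2^{O(R)} n = 2^{O(1/\epsilon)^{C'''}} n$, matching the statement.

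The main obstacle will be careful bookkeeping rather than any new mathematical content: one must specify the dependence of $\epsilon_{LC}$ on $\eta$, and of $R$ on $\epsilon_{LC}$, precisely enough that the exponents $C''$ and $C'''$ come out as absolute constants once $\epsilon$ is substituted in. Both Lemma~\ref{lem:GapLabelCover-Hardness} and Theorem~\ref{thm:3-lin} are applied as black boxes; the only delicacy is in propagating the parameters through the two reductions consistently.
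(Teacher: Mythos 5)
Your proposal follows essentially the same route as the paper: invoke Lemma~\ref{lem:GapLabelCover-Hardness} for subexponential hardness of $(1,\epsilon)$-Gap Label Cover, push it through H\r{a}stad's reduction (Theorem~\ref{thm:3-lin}), and track the $2^{O(K)}$ blowup in variables to obtain the $2^{2^{-(1/\epsilon)^{C''}}n}$ lower bound and the $2^{O(1/\epsilon)^{C'''}}n$ constraint count. One small imprecision: Lemma~\ref{lem:GapLabelCover-Hardness} rules out time $2^{2^{-5(1/\epsilon)^{C}}NK}\poly(N)$, not $2^{c_1 N}$ for an absolute constant $c_1$, but this does not affect the argument since the dominant doubly-exponential loss comes from the long-code blowup $n\leq 2^{2K}N$, and the $\epsilon$-dependent coefficient is simply absorbed into $C''$.
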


We point out that the above hardness result is folklore\footnote{See e.g., the TCS Stack Exchange post~\cite{rwurl} for a discussion on quantitatively similar bounds for Max $3$-SAT.}, and it is well-known that it can be derived by combining ETH, Linear Size PCP conjecture, and the techniques from \cite{Has01}. For the sake of completeness, we provide a proof of this lemma in this section. 


We remind the reader the Exponential Time Hypothesis (ETH) and definition of the Label Cover problem.
\begin{conjecture}[ETH~\cite{IPZ01}]				\label{conj:seth}
	There exists a constant $c \in (0,1)$ such that for all large enough integers $n$, the $3$-SAT problem on $n$ variables cannot be solved in time $2^{cn} \poly(n)$.
\end{conjecture}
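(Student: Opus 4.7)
The statement labeled Conjecture~\ref{conj:seth} is the Exponential Time Hypothesis of Impagliazzo, Paturi, and Zane; it is an unproven assumption, not a theorem, and no honest plan I can write down will actually establish it. A proof would in particular imply $\mathsf{P}\neq\mathsf{NP}$ together with a concrete $2^{\Omega(n)}$ lower bound on the deterministic time complexity of 3-SAT, which is well beyond what any current technique can deliver. Consistent with this, the paper does not attempt to prove ETH; it merely states the conjecture and uses it as a hypothesis downstream (in Lemma~\ref{lem:lc-hardness} and Theorems~\ref{thm:3lin-hardness} and~\ref{thm:4lin-hardness}).

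\textbf{The shape any proof would have to take.} If one were, hypothetically, to try to prove ETH, the only coherent skeleton I see is: (i) reduce the statement, via the sparsification lemma of Impagliazzo--Paturi--Zane, to the equivalent claim that 3-SAT on $m$ clauses requires $2^{\Omega(m)}$ time (so that density is not the source of the lower bound); (ii) fix a model of computation general enough to capture the best known 3-SAT algorithms---currently randomized schemes running in roughly $1.307^n$ time, plus their derandomizations; and (iii) prove an exponential-time lower bound in that model. Step~(iii) is the real obstacle. Every known unconditional lower-bound technique---random restrictions, the approximation method, the polynomial method, communication-complexity-based arguments, algebraic degree bounds---falls many orders of magnitude short of $2^{\Omega(n)}$ against general circuits or Turing machines. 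The strongest known unconditional lower bound for SAT on any standard model is barely superlinear, and all of the classical barriers (relativization, algebrization, natural proofs) stand squarely in the way.

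\textbf{My plan, therefore.} Because no route to a proof is available, my plan is to adopt the same stance the paper takes: treat Conjecture~\ref{conj:seth} as an axiom and pass it, together with the Linear Size PCP Conjecture (Conjecture~\ref{conj:lin-pcp}), into the subsequent reductions. The substantive work---combining ETH with a linear-size PCP and H\r{a}stad's long-code-based reduction to obtain the $2^{c_\varepsilon n}$ time lower bound for gap Max 3-Lin in Lemma~\ref{lem:lc-hardness}, then pushing this through the combinatorial transformation of Section~\ref{sec:eth-4lin} and the generic advice-model reduction of Section~\ref{sec:oracle-advice}---is where provable content lives. Any attempt I could sketch to prove ETH itself would be speculation, not mathematics, so I will not pretend otherwise.
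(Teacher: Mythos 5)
Your assessment is exactly right: the statement is the Exponential Time Hypothesis, which the paper states as an unproven conjecture (citing Impagliazzo–Paturi–Zane) and uses only as a hypothesis in its downstream hardness results, offering no proof. Treating it as an axiom, as you propose, is precisely what the paper does.
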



\begin{definition}[Label Cover]
	An instance $\cL(U,V,E,\Sigma_U,\Sigma_V, \{\pi_e\}_{e \in E})$ of Label Cover is a $2$-CSP on a bipartite graph $(U,V,E)$. The left and right label sets of instance $\cL$ are $\Sigma_U$ and $\Sigma_V$, respectively. Every edge $(u,v) \in E$ is identified with a {\bf projection} constraint $\pi_{uv}:\Sigma_V \to \Sigma_U$. A labeling $\sigma:U \cup V \to \Sigma_U \cup \Sigma_V$ satisfies an edge $(u,v)$ if $\pi_{uv}(\sigma(v)) = \sigma(u)$. We denote the maximum fraction of constraints that can be satisfied by a labeling by ${\sf Val}(\cL)$.
	For every $0 < s < c \leq 1$, the objective of the $(c,s)$-Gap Label Cover problem on $\cL$ is to decide whether ${\sf Val}(\cL) \geq c$ or ${\sf Val}(\cL) \leq s$.
	
	We write $N = |U| + |V|$ and $K = |\Sigma_U| + |\Sigma_V|$ to denote the number of variables and labels in $\cL$.
	
\end{definition}

{\bf Linear Size PCPs}. We will use the following conjecture on linear size PCPs. 

\begin{conjecture}[Linear Size PCP Conjecture, Conjecture 4.2 in \cite{DinurGapETH}]					\label{conj:lin-pcp}
	For some $C_1, C_2 > 0$ and all sufficiently small $\epsilon > 0$, there exists a polynomial-time reduction from $3$-SAT to Label Cover that satisfies the following properties. Assume that the reduction maps a $3$-SAT instance $\Phi$ of size $m$ to a Label Cover instance $\cL = (U,V,E,\Sigma_U,\Sigma_V,\{\pi_e\}_{e \in E})$. Then,
	\begin{itemize}
		\item $|U|,|V| \leq (1/\epsilon)^{C_1}\cdot m$.
		\item $|\Sigma_U|,|\Sigma_V| \leq (1/\epsilon)^{C_2}$.
		\item If ${\sf Val}(\Phi) = 1$, then ${\sf Val}(\cL) = 1$.
		\item If ${\sf Val}(\Phi) < 1$, then ${\sf Val}(\cL) \leq \epsilon$.
	\end{itemize}
\end{conjecture}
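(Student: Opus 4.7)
The plan is to build the required reduction in two stages: first turn a 3-SAT instance of size $m$ into a constant-gap, constant-alphabet Label Cover instance of linear size, and then amplify the soundness gap down to $\epsilon$ via parallel repetition, tracking the size and alphabet blowups so that they match the claimed $(1/\epsilon)^{C_1} m$ and $(1/\epsilon)^{C_2}$ bounds.

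\textbf{Base Label Cover.} I would invoke a linear-size PCP for 3-SAT with $O(1)$ queries, perfect completeness, some fixed constant soundness $s_0 < 1$, and constant answer alphabet $\Sigma_0$. The starting point is a near-linear-size PCP of proximity construction in the spirit of Ben-Sasson--Sudan and its refinements, fed through Dinur-style gap amplification to yield constant gap while preserving linear size up to constants. One then applies the standard tuples-vs.-positions projection transformation to produce a two-query projection Label Cover $\cL_0$: the $V$-side is the set of query tuples used by the verifier, the $U$-side is the set of individual PCP positions, and the projection $\pi_{uv}$ restricts a $V$-side assignment to the coordinate at position $u$. This yields $|U_0|, |V_0| = O(m)$, alphabet $|\Sigma_0|^{O(1)} = O(1)$, perfect completeness, and soundness $s_0 < 1$; crucially, $\cL_0$ has bounded degree $\Delta = O(1)$, because each PCP position participates in $O(1)$ verifier tuples.

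\textbf{Gap amplification.} I would then apply Raz's parallel repetition theorem $k = \Theta(\log(1/\epsilon))$ times to $\cL_0$, producing $\cL := \cL_0^{\otimes k}$. Completeness is trivially preserved, and by parallel repetition the soundness drops to $s_0^{\Omega(k)} \le \epsilon$ once $k$ is large enough. The alphabet grows to $|\Sigma_0|^{O(k)} = 2^{O(k)} = (1/\epsilon)^{O(1)}$, matching the $(1/\epsilon)^{C_2}$ bound. For the size: the $k$-wise repeated $V$-side has vertex set $V_0^k$, but using the bounded-degree structure of $\cL_0$ (each edge has $\Delta^{k-1}$ extensions to a $k$-edge tuple sharing one endpoint) the relevant vertex and edge counts reduce to $|E(\cL)| = O(|E_0|) \cdot \Delta^{k-1} = O(m) \cdot (1/\epsilon)^{O(1)}$, and likewise $|U|, |V| = O(m) \cdot (1/\epsilon)^{O(1)}$, matching the $(1/\epsilon)^{C_1} m$ bound. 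Both stages run in polynomial time. The four bulleted properties then follow directly: the first two from these size and alphabet calculations; perfect completeness because a satisfying assignment to $\Phi$ determines an accepting PCP proof, hence a labeling of $\cL_0$ satisfying every projection, which repeated coordinate-wise satisfies every constraint of $\cL$; and soundness $\le \epsilon$ from the parallel-repetition bound applied to the base PCP's soundness on unsatisfiable $\Phi$.

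\textbf{Main obstacle.} The delicate ingredient is the base PCP itself: producing a truly linear-size (not merely quasi-linear) PCP for 3-SAT with $O(1)$ queries, constant gap, and constant-degree verifier pattern is the long-standing technical issue that keeps this statement at the conjectural level. Off-the-shelf PCP pipelines yield size $m \cdot \mathrm{polylog}(m)$, and shaving off the polylog factor requires careful algebraic machinery (short PCPs of proximity over constant-size alphabets, with concatenation-composition bookkeeping designed to avoid polylog blowups at each recursion level). A subtler secondary concern is that naive parallel repetition blows the $V$-side up to $|V_0|^k$; it is the bounded-degree property of $\cL_0$ that allows the improved $|V_0| \cdot \Delta^{k-1}$ bound, and preserving bounded degree through the PCP-to-Label-Cover transformation must be done with care. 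Once these two pieces are in place, the parameter calculations and the verification of the four bullets are routine.
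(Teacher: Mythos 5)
This statement is not a theorem of the paper and has no proof there: it is the Linear Size PCP Conjecture (Conjecture 4.2 in the cited work of Dinur), which the paper explicitly \emph{assumes} as a hypothesis in all of its hardness results (Theorems~\ref{thm:3lin-hardness} and~\ref{thm:4lin-hardness} and Lemmas~\ref{lem:lc-hardness} and~\ref{lem:GapLabelCover-Hardness} are all conditioned on it). So there is nothing in the paper to compare your argument against, and more importantly, your proposal does not constitute a proof. Your first stage presupposes exactly the open ingredient: a truly linear-size (not $m\,\mathrm{polylog}(m)$), constant-query, constant-gap PCP for 3-SAT with a bounded-degree query structure. As you yourself concede in your ``Main obstacle'' paragraph, no such object is known; all existing pipelines (Ben-Sasson--Sudan plus Dinur gap amplification, or the Moshkovitz--Raz construction the paper contrasts with) lose at least a polylogarithmic factor. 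Assuming it makes the argument circular --- you are reducing the conjecture to a statement that is essentially equivalent to it.

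A secondary gap: even granting the base instance, your size accounting for parallel repetition is not routine. Standard $k$-fold parallel repetition has vertex set $V_0^k$ and edge set $E_0^k$; the ``only count tuples of edges sharing structure'' trick you describe does not produce the Raz soundness decay --- what you are gesturing at is \emph{derandomized} parallel repetition, and making that work with projection constraints, polynomial alphabet, and near-linear size is itself a substantial known difficulty (this is precisely the territory of Moshkovitz--Raz and of the obstructions to derandomizing repetition identified by Feige--Kilian). So both stages of your plan rest on unresolved constructions, and the statement should be treated as the paper treats it: an assumption, not a result.
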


The above conjecture posits that for any $\epsilon$, there exists a reduction from $3$-SAT on $m$-clauses to the $(1,\epsilon)$-Gap-Label-Cover problem, such that the number of variables is at most ${\rm poly}(1/\epsilon)$ times the number of clauses, and the label set size is also bounded by ${\rm poly}(1/\epsilon)$. It is instructive to compare this with the result of Moshkovitz--Raz~\cite{MR08}, which gives a reduction from $3$-SAT to $(1,\epsilon)$-label cover with a weaker guarantee of $O_\epsilon(\log m)$-blow up in the instance size. The Linear Size PCP Conjecture states that one can get (qualitatively) the same parameters in the reduction from $3$-SAT to $(1,\epsilon)$-Gap-Label Cover without the additional $O(\log m)$ blow up in the number of variables.

{\bf Degree Increment Lemma}. We shall also need the following lemma which says that one can increase the average degree of the Label Cover instance while preserving its optimal value. 

\begin{lemma}               \label{lem:deg-incr}
	The following holds for every integer $\ell \geq 1$. Let $\cL = (U,V,E,\Sigma_U, \Sigma_V, \{\pi_e\}_{e \in E})$ be a Label Cover instance. Then there exists a polynomial time procedure that constructs a Label Cover instance $\cL' = (U',V',E',\Sigma_U,\Sigma_V,\{\pi'_e\}_{e \in E'})$ which satisfies the following properties:
	\begin{itemize}
		\item ${\sf Val}(\cL') = {\sf Val}(\cL)$.
		\item $|U'| = \ell |U|$ and $|V'| = \ell |V|$.
		\item $|E'| = \ell^2 |E|$.
	\end{itemize}
\end{lemma}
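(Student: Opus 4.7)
The plan is to construct $\cL'$ as a complete bipartite blow-up of each edge of $\cL$. Concretely, for every $u \in U$ I introduce $\ell$ disjoint copies $u^{(1)},\dots,u^{(\ell)}$, and likewise $v^{(1)},\dots,v^{(\ell)}$ for every $v \in V$. For each original edge $e=(u,v)\in E$ with projection $\pi_e$, I add all $\ell^2$ edges $(u^{(i)},v^{(j)})$ for $i,j \in [\ell]$ and equip each of them with the \emph{same} projection constraint $\pi_e$. This construction is clearly polynomial time, keeps the label sets $\Sigma_U,\Sigma_V$ unchanged, and immediately gives $|U'|=\ell|U|$, $|V'|=\ell|V|$, and $|E'|=\ell^2|E|$, so the only nontrivial item to verify is that ${\sf Val}(\cL')={\sf Val}(\cL)$.

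For the inequality ${\sf Val}(\cL') \geq {\sf Val}(\cL)$, I lift any labeling $\sigma$ of $\cL$ by giving every copy the same label as its original, i.e.\ $\sigma'(u^{(i)}):=\sigma(u)$ and $\sigma'(v^{(j)}):=\sigma(v)$. Since each of the $\ell^2$ copy-edges of $e=(u,v)$ carries the identical projection $\pi_e$, all of them are satisfied by $\sigma'$ exactly when $e$ is satisfied by $\sigma$. Counting with the uniform blow-up factor $\ell^2$, the fraction of satisfied constraints in $\cL'$ equals that in $\cL$.

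For the reverse inequality ${\sf Val}(\cL') \leq {\sf Val}(\cL)$, I use a random projection argument. Given any labeling $\sigma'$ of $\cL'$, independently and uniformly pick an index $i_u \in [\ell]$ for each $u \in U$ and $j_v \in [\ell]$ for each $v \in V$, and set $\sigma(u):=\sigma'(u^{(i_u)})$ and $\sigma(v):=\sigma'(v^{(j_v)})$. For any original edge $e=(u,v)$, the probability over the random choices that $\pi_e(\sigma(v))=\sigma(u)$ is precisely the fraction of the $\ell^2$ copy-edges of $e$ satisfied by $\sigma'$. By linearity of expectation over the uniform distribution on $E$, the expected fraction of constraints of $\cL$ satisfied by $\sigma$ equals the fraction of constraints of $\cL'$ satisfied by $\sigma'$, so some realization $\sigma$ achieves at least ${\sf Val}(\cL')$.

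There isn't really a substantial obstacle here: the construction is the standard edge blow-up used in routine degree-regularization gadgets, and both directions of the value equality reduce to essentially one-line calculations. The one point worth tracking carefully is that projections are copied verbatim across the $\ell^2$ copy-edges of each original edge (without composing with any permutation on the copy indices), since this is exactly what ensures that the lifting step preserves the satisfied/unsatisfied status of each original constraint and that the random projection step decouples cleanly into a per-edge expectation.
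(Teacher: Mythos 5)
Your proposal is correct and follows essentially the same route as the paper: the identical $\ell \times \ell$ edge blow-up with verbatim copies of each projection, the same copy-lifting argument for ${\sf Val}(\cL') \geq {\sf Val}(\cL)$, and a probabilistic-method projection for the converse. The only cosmetic difference is that you sample an independent index for each vertex while the paper samples a single global pair $(i,j) \in [\ell] \times [\ell]$; both yield the same expectation computation.
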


The above lemma is folklore; we provide a proof of it in Appendix \ref{sec:deg-incr} for the sake of completeness. The following corollary follows immediately by combining Conjecture \ref{conj:lin-pcp} and Lemma \ref{lem:deg-incr}.

\begin{corollary}           \label{corr:lin-pcp}
	Assume Conjecture \ref{conj:lin-pcp}. For some $C_1, C_2 > 0$ and all sufficiently small $\epsilon > 0$, the following holds for every integer $\ell \geq 1$. There exists a polynomial-time reduction from $3$-SAT to Label Cover that satisfies the following properties. Assume that the reduction maps a $3$-SAT instance $\Phi$ of size $m$ to a Label Cover instance $\cL = (U,V,E,\Sigma_U,\Sigma_V,\{\pi_e\}_{e \in E})$. Then,
	\begin{itemize}
		\item $|U|,|V| \leq (1/\epsilon)^{C_1}\cdot \ell \cdot m$.
		\item $|\Sigma_U|,|\Sigma_V| \leq (1/\epsilon)^{C_2}$.
		\item If ${\sf Val}(\Phi) = 1$, then ${\sf Val}(\cL) = 1$.
		\item If ${\sf Val}(\Phi) < 1$, then ${\sf Val}(\cL) \leq \epsilon$.
		\item The average degree of $\cL$ is at least $\ell$.
	\end{itemize}
\end{corollary}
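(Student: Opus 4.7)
The statement is a straightforward composition of Conjecture \ref{conj:lin-pcp} and Lemma \ref{lem:deg-incr}: use the first to obtain a linear-size gap Label Cover instance, then use the second to pump up the average degree to $\ell$ at an $\ell$-factor cost in the vertex count.

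The plan is as follows. First, given a $3$-SAT instance $\Phi$ on $m$ clauses and a sufficiently small $\epsilon > 0$, I would apply Conjecture \ref{conj:lin-pcp} to obtain a Label Cover instance $\cL_0 = (U_0, V_0, E_0, \Sigma_U, \Sigma_V, \{\pi_e\}_{e \in E_0})$ satisfying $|U_0|, |V_0| \leq (1/\epsilon)^{C_1} \cdot m$, $|\Sigma_U|, |\Sigma_V| \leq (1/\epsilon)^{C_2}$, perfect completeness (${\sf Val}(\Phi) = 1 \Rightarrow {\sf Val}(\cL_0) = 1$), and soundness $\epsilon$. Next, I would feed $\cL_0$ into the degree-increment procedure of Lemma \ref{lem:deg-incr} with parameter $\ell$ to obtain the final instance $\cL = (U,V,E,\Sigma_U,\Sigma_V,\{\pi'_e\}_{e \in E})$. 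By that lemma, $|U| = \ell|U_0|$, $|V| = \ell|V_0|$, $|E| = \ell^2 |E_0|$, the label sets are unchanged, and ${\sf Val}(\cL) = {\sf Val}(\cL_0)$.

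It then remains to verify the five bullet points. The vertex bounds $|U|, |V| \leq \ell \cdot (1/\epsilon)^{C_1} \cdot m$ follow immediately, as do the label-set bounds and the completeness/soundness conditions, which are inherited unchanged from $\cL_0$. For the average degree, I would argue that without loss of generality every vertex of $\cL_0$ appears in at least one constraint (any isolated vertex can be deleted from $\cL_0$ without affecting value, label sets, or size upper bounds), so $|E_0| \geq \max(|U_0|, |V_0|)$. Hence the average degree of $\cL$ on either side is at least $|E|/\max(|U|,|V|) = \ell^2 |E_0|/(\ell \max(|U_0|, |V_0|)) \geq \ell$, as required. Composition of the two polynomial-time procedures is again polynomial time.

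There is essentially no obstacle here beyond bookkeeping: the two ingredients are packaged exactly so that their composition yields the corollary, and the only mildly delicate point is the no-isolated-vertex assumption used to extract the $\geq \ell$ average degree from the $\ell^2$-factor blow-up in the number of edges versus the $\ell$-factor blow-up in vertices. I would mention this assumption explicitly and note that it is a cost-free preprocessing step.
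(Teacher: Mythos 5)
Your proposal is correct and matches the paper exactly: the paper simply asserts that the corollary "follows immediately by combining Conjecture \ref{conj:lin-pcp} and Lemma \ref{lem:deg-incr}," which is precisely the composition you carry out. Your extra remark about deleting isolated vertices so that $|E_0|$ dominates the vertex count is a reasonable piece of bookkeeping the paper leaves implicit.
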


{\bf Sparsification Lemma}. Additionally, we will use the following sparsification lemma for $3$-SAT by Calabro, Impagliazzo, and Paturi~\cite{CIP06}.

\begin{lemma}[Sparsification Lemma~\cite{CIP06}]						\label{lem:sparse}
	For every $\gamma > 0$, there exists a deterministic algorithm that given a $3$-SAT formula $F$ on $n$-variables outputs a sequence of $3$-SAT formulas $F_1,\ldots,F_s$ such that 
	\begin{itemize}
		\item[1.] $s \leq 2^{\gamma n}$.
		\item[2.] $F$ is satisfiable if and only if at least one of $F_1,\ldots,F_s$ is satisfiable.
		\item[3.] Each formula $F_i$ is on $n$ variables. The number of clauses in each $F_i$ is at most $(1/\gamma)^9\cdot n$.
		\item[4.] The algorithm runs in time $2^{\gamma n}\cdot \poly(n)$, where the degree of the polynomial may depend on $\gamma$.
	\end{itemize}
\end{lemma}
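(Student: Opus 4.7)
The plan is to prove this classical Sparsification Lemma by describing the recursive branching scheme of Impagliazzo--Paturi--Zane~\cite{IPZ01}, as sharpened by Calabro--Impagliazzo--Paturi~\cite{CIP06}. The algorithm repeatedly breaks $F$ along ``heavy'' partial clauses until every leaf of the recursion tree is sparse, outputting the formulas at the leaves as $F_1,\dots, F_s$. Satisfiability is preserved at every split (property~2), the tree has at most $2^{\gamma n}$ leaves (property~1), each leaf is a $3$-CNF on the same $n$ variables as $F$ (property~3), and the total work is $2^{\gamma n}\poly(n)$ (property~4).

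Fix an increasing threshold sequence $t_1 > t_2 > t_3$ of the form $t_i = (C/\gamma)^{h(i)}$ for an absolute constant $C$ and suitable exponents $h(i)$. Call a set of literals $H$ with $|H|\leq 3$ a \emph{heavy core} of the current sub-formula $F'$ if strictly more than $t_{|H|}$ clauses of $F'$ contain $H$ as a sub-clause. The algorithm searches $F'$ for a heavy core $H$; if none exists, it outputs $F'$ as a leaf. Otherwise, it branches on $H$ into $|H|+1$ children: the \emph{all-false} child fixes every literal of $H$ to $0$, which shrinks each of the more than $t_{|H|}$ clauses containing $H$ by removing those $|H|$ falsified literals (so each such clause drops in width by $|H|$); in each of the $|H|$ \emph{one-true} children, a distinct literal $\ell\in H$ is fixed to $1$, deleting every clause containing $\ell$ and deleting $\bar\ell$ from every remaining clause. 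Correctness at a single split is immediate: any assignment to $F'$ either sets every literal of $H$ to $0$ or sets some $\ell\in H$ to $1$, so $F'$ is satisfiable iff at least one child is.

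Property~3 then follows because at a leaf no heavy core exists; in particular no single literal appears in more than $t_1$ clauses, so summing over the $2n$ possible literals the number of non-empty clauses is at most $2nt_1$. Choosing $t_1\leq\tfrac12(1/\gamma)^9$ yields the desired $(1/\gamma)^9 n$ bound. For property~4, finding a heavy core at each node takes $O(n^3)\cdot\poly(n)=\poly(n)$ time (enumerate all subsets of at most three literals and count containments), so the total time is the number of nodes times $\poly(n)$, which is $2^{\gamma n}\poly(n)$ once property~1 is in hand.

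The main obstacle is the leaf-count bound $s\leq 2^{\gamma n}$. The standard approach is a potential-function/credit argument: assign to each sub-formula $F'$ a potential of the form
\[
\Phi(F') \;=\; n'(F') \;+\; \sum_{j=1}^{3} \beta_j\, N_j(F'),
\]
where $n'(F')$ counts unassigned variables, $N_j(F')$ counts width-$j$ clauses, and $0<\beta_1<\beta_2<\beta_3$ are tuned to the thresholds $t_j$. A direct check shows that at each branching on a size-$i$ heavy core the $i$ ``one-true'' children each decrease $n'$ by one, while the ``all-false'' child strictly decreases $\sum_j \beta_j N_j$ by at least $t_i(\beta_{j^\star}-\beta_{j^\star-i})$, where $j^\star$ is the width of the shrinking clauses (from the more than $t_i$ clauses that drop in width). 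Tuning the $\beta_j$ so that this per-branch decrease amortizes the $\log_2(|H|+1)$ branching fan-out lets a Kraft-style telescoping bound the number of leaves by $2^{\gamma(n-\Phi_{\min})}\leq 2^{\gamma n}$, provided the $t_j$ are of order $(1/\gamma)^{O(1)}$. The quantitative heart of the proof is balancing the choices of the $t_j$ and $\beta_j$ so that both this leaf-count bound and the sparsity bound $t_1\leq\tfrac12(1/\gamma)^9$ hold simultaneously for $k=3$.
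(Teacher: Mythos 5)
The paper offers no proof of this lemma at all: it is imported verbatim from Calabro--Impagliazzo--Paturi \cite{CIP06} (building on \cite{IPZ01}) and used as a black box, so there is no in-paper argument to measure yours against. Your write-up is therefore an attempt to reprove the Sparsification Lemma from scratch. It correctly identifies the overall shape of the known proof --- recursive branching on heavy common sub-clauses, leaves output as the sparse formulas $F_1,\dots,F_s$, a potential function to bound the number of leaves --- and your treatment of properties 2, 3, and 4 is fine modulo property 1.

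Property 1, which is the entire content of the lemma, has a genuine gap. Your potential $\Phi(F') = n'(F') + \sum_j \beta_j N_j(F')$ has root value $n + \Theta(\beta_3 m)$, where $m$ is the number of clauses of the input formula; an arbitrary $3$-CNF on $n$ variables can have $m = \Theta(n^3)$, so the Kraft-style telescoping bound you invoke gives at most $2^{\gamma(\Phi_{\mathrm{root}}-\Phi_{\min})} = 2^{\gamma(n+\Theta(\beta_3 n^3))}$ leaves, not $2^{\gamma n}$; your final inequality silently replaces $\Phi_{\mathrm{root}}$ by $n$. There is a further tension in the same step: with $(|H|+1)$-ary branching, the Kraft condition requires every child to decrease $\Phi$ by at least $\gamma^{-1}\log_2(|H|+1)$, and the only large guaranteed progress in the ``one-true'' children is the deletion of the $>t_{|H|}$ clauses through $H$; making that worth enough potential forces $\beta_{j}t = \Omega(1/\gamma)$, which feeds back into the super-linear root potential. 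This is precisely why the actual proof in \cite{IPZ01,CIP06} uses a binary branching on sunflower hearts (assert the heart as a new short clause versus falsify all its literals) together with a careful accounting of how many times a heart can be introduced over a given variable set --- it is not a matter of ``tuning $t_j$ and $\beta_j$.'' As written, the claimed bound $s\le 2^{\gamma n}$ is not established; given that the paper itself treats the lemma as a citation, the honest options are either to do the same or to carry out the full \cite{CIP06} accounting.
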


\subsubsection{Reduction to Gap Label Cover}

In this section, we reduce $3$-SAT to Gap Label Cover.

\begin{lemma}
	\label{lem:GapLabelCover-Hardness}
	Assume ETH and the Linear Size PCP conjecture. For some constants $C > 2C_2 > 0$ and every choice of constants $0 < \epsilon  < c/2$, there exists no algorithm for $(1,\epsilon)$-Gap Label Cover that given an instance with at most $N$ vertices, $K = (1/\epsilon)^{C_2}$ labels, and at least $2^{5(1/\epsilon)^{C_2}} N$ constraints, decides whether the instance is completely satisfiable or at most $\epsilon$ satisfiable in time $2^{2^{-5(1/\epsilon)^C}\cdot NK} \poly(N)$. Here $c$ is the constant from Conjecture \ref{conj:seth}. 
\end{lemma}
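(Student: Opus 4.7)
The plan is to combine ETH, the Sparsification Lemma, and the linear-size PCP reduction augmented by degree boosting (Corollary \ref{corr:lin-pcp}) to obtain the claimed lower bound. Suppose for contradiction that for some $C > 2C_2$ and some $\epsilon \in (0, c/2)$ there is an algorithm $\mathsf{A}$ solving $(1,\epsilon)$-Gap Label Cover on the stated parameter regime in time $2^{2^{-5(1/\epsilon)^C}\cdot NK}\poly(N)$. I will use $\mathsf{A}$ to decide 3-SAT on $n$ variables in time less than $2^{cn}\poly(n)$, contradicting ETH.

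Given a 3-SAT formula $\Phi$ on $n$ variables, first apply the Sparsification Lemma (Lemma \ref{lem:sparse}) with parameter $\gamma := c/4$, producing $s \leq 2^{\gamma n}$ sparsified instances $F_1,\ldots,F_s$, each on $n$ variables and with at most $m := (4/c)^9\, n = O(n)$ clauses, such that $\Phi$ is satisfiable iff at least one $F_i$ is. To each $F_i$ I then apply Corollary \ref{corr:lin-pcp} with degree parameter $\ell := 2^{6(1/\epsilon)^{C_2}}$, producing a Label Cover instance $\cL_i$ with $N \leq 2(1/\epsilon)^{C_1}\ell m$ vertices, $K \leq 2(1/\epsilon)^{C_2}$ labels (padded to exactly $(1/\epsilon)^{C_2}$ per side if desired), average degree at least $\ell$ (so at least $\ell N/2 \geq 2^{5(1/\epsilon)^{C_2}}N$ constraints), and $(1,\epsilon)$ completeness/soundness gap. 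Running $\mathsf{A}$ on each $\cL_i$ and accepting iff $\mathsf{A}$ accepts at least one instance then correctly decides whether $\Phi$ is satisfiable.

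To finish, I bound the total running time. The single-call overhead satisfies
\[ NK \;\leq\; 4(1/\epsilon)^{C_1+C_2}\,\ell\, m \;=\; O(1)\cdot (1/\epsilon)^{C_1+C_2}\cdot 2^{6(1/\epsilon)^{C_2}}\cdot n. \]
Since $C > 2C_2$, the exponent $5(1/\epsilon)^C - 6(1/\epsilon)^{C_2}$ grows super-polynomially in $1/\epsilon$ and swallows the $\log_2\poly(1/\epsilon)$ term, so for all sufficiently small $\epsilon$,
\[ 2^{-5(1/\epsilon)^C}\cdot NK \;\leq\; (c/4)\,n. \]
Hence the total time is at most $2^{\gamma n}\cdot 2^{(c/4)n}\poly(n) = 2^{(c/2)n}\poly(n) < 2^{cn}\poly(n)$, contradicting ETH. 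The main subtlety is the parameter juggling: one must simultaneously pick $\gamma$ small (to keep the sparsification overhead within ETH's budget), $\ell$ large enough to meet the constraint-density requirement but small enough that $NK = O_\epsilon(n)$, and $C$ large enough relative to $C_2$ so that the supposed algorithmic savings $2^{-5(1/\epsilon)^C}$ dominate the $2^{6(1/\epsilon)^{C_2}}$ blow-up introduced by degree boosting. Once these interdependent choices are made consistent, everything else is routine bookkeeping using the stated lemmas and corollary.
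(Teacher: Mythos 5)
Your proposal is correct and follows essentially the same route as the paper's proof: sparsify the 3-SAT instance, apply the linear-size PCP reduction with the degree-increment boost (Corollary \ref{corr:lin-pcp}) to each sparsified formula, run the hypothesized Gap Label Cover algorithm on every resulting instance, and verify that the total time undercuts the ETH bound. The only differences are cosmetic parameter choices (the paper takes $\gamma=\epsilon$ and $\ell=2^{5(1/\epsilon)^{C_2}}$ with $C:=C_1+2C_2+10$, where you take $\gamma=c/4$ and $\ell=2^{6(1/\epsilon)^{C_2}}$), and your bookkeeping of the factor-of-two issues in $N$, $K$, and the constraint count is, if anything, slightly more careful than the paper's.
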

\begin{proof}
	Let $C_1,C_2$ be the constants from Conjecture \ref{conj:lin-pcp} and let
	\[
	C:= C_1 + 2C_2 + 10.
	\] 
	We describe a Turing reduction from the $3$-SAT problem to the Gap Label Cover problem:
	
	{\bf Reduction}. Given a $3$-SAT instance $\Phi$ on $n$ variables, we do the following:
	
	\begin{itemize}
		\item Run the sparsification algorithm from Lemma \ref{lem:sparse} on $\Phi$ with $\gamma = \epsilon$ and get $3$-SAT instances $\Phi_1,\ldots,\Phi_s$. 
		\item For every $i \in [s]$, run the reduction from Corollary \ref{corr:lin-pcp} on $\Phi_i$ with $\ell = 2^{5(1/\epsilon)^{C_2}}$ and get a $(1,\epsilon)$-Gap Label Cover instance $\cL_i$.
		\item Solve each of the $(1,\epsilon)$-Gap Label Cover instances $\cL_i$ and output YES if at least one of the $\cL_i$ is a YES instance.
	\end{itemize}
	
	We make a couple of immediate observations. Note that for every $i \in [s]$, the formula $\Phi_i$ has $n$ variables and at most $m := (1/\gamma)^9\cdot n$ clauses. Furthermore, the corresponding label cover instance $\cL_i$ has at most $N = (1/\epsilon)^{C_1} \cdot \ell \cdot m = 2^{5(1/\epsilon)^{C_2}}(1/\epsilon)^{C_1 + 9} n$ variables, and $K \leq (1/\epsilon)^{C_2}$ labels. Furthermore, it has at least $2^{5K} N$ constraints.
	
	{\bf Completeness}. Suppose $\Phi$ is satisfiable. Then, for some $i \in [s]$, the corresponding $3$-SAT instance $\Phi_i$ is satisfiable, and thus the corresponding Label Cover instance $\cL_i$ is also satisfiable.
	
	{\bf Soundness}. Suppose $\Phi$ is not satisfiable, then for every $i \in [s]$, $\Phi_i$ is not satisfiable, and hence ${\sf Val}(\cL_i) < \epsilon$.
	
	Now suppose there exists a $2^{2^{-5(1/\epsilon)^{C}} \cdot NK}\poly(N)$ time algorithm for $(1,\varepsilon)$-Gap Label Cover. By running it on every instance $\cL_i$, we solve $3$-SAT in time:  
	\[
	2^{\epsilon n} \cdot 2^{2^{-5(1/\epsilon)^{C}} \cdot \left(2^{5(1/\epsilon)C_2}(1/\epsilon)^{C_1 + 9} n \cdot (1/\epsilon)^{C_2}\right)}\poly(N) \leq 2^{2\epsilon n} \poly(n),
	\]
	which refutes ETH if $\epsilon < c/2$.
\end{proof}

\subsubsection{Reduction from Gap Label Cover to Max 3-Lin}

Here, we recall H\r{a}stad's $3$-bit PCP-based reduction from Label Cover to Max $3$-Lin. 

\begin{theorem}[\cite{Has01}]					\label{thm:3-lin}
	There exists an increasing continuous function $\eta:[0,1] \to [0,1]$ with $\eta(0) = 0$ for which the following holds. Given an instance $\cL = (U,V,E,\Sigma_U,\Sigma_V,\{\pi_e\}_{e \in E})$ of $(1,\epsilon)$-Gap Label Cover with $n := |U| + |V|$ variables, and $k: = |\Sigma_U| + |\Sigma_V|$ labels, and average degree at least $2^{5k}$, there exists a randomized polynomial-time reduction that with probability at least $1 - 2^{-k/2}$, outputs an instance of Max $3$-Lin $\cI''$ such that
	\begin{itemize}
		\item If ${\sf Val}(\cL) = 1$, then ${\sf Val}(\cI'') \geq 1 - \eta(\epsilon)$.
		\item If ${\sf Val}(\cL) \leq \epsilon$, then ${\sf Val}(\cI'') \leq 1/2 + \eta(\epsilon)$.
        \item The number of variables in $\cI''$ is $n' = 2^{|\Sigma_U|}|U| + 2^{|\Sigma_V|}|V|$, and the number of constraints is at most $8n'/\epsilon^2$.
        \item All constraints in $\cI''$ are non-edge-weighted and distinct.
	\end{itemize} 
	Additionally, $\eta(\epsilon) = C'/\sqrt{\log(1/\epsilon)}$ for some constant $C'>0$.
\end{theorem}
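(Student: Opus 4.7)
The plan is to apply H\r{a}stad's classical long-code based three-bit test to $\cL$. First, I would introduce the Max 3-Lin variables: for each $u\in U$ a block $\{f_u(x)\}_{x\in\{\pm 1\}^{\Sigma_U}}$ of $2^{|\Sigma_U|}$ Boolean variables, intended as the long code of the label at $u$, and likewise $\{g_v(y)\}_{y\in\{\pm 1\}^{\Sigma_V}}$ for each $v\in V$; this yields exactly $n' = 2^{|\Sigma_U|}|U|+2^{|\Sigma_V|}|V|$ variables. The constraints are obtained by sampling $M := \lfloor 8n'/\epsilon^2 \rfloor$ independent tests. Each test draws a uniform edge $(u,v)\in E$, uniform $x\in\{\pm 1\}^{\Sigma_U}$ and $y\in\{\pm 1\}^{\Sigma_V}$, and a noise string $\mu\in\{\pm 1\}^{\Sigma_V}$ whose coordinates are independently $-1$ with probability $\delta \asymp 1/\log(1/\epsilon)$; then, setting $z_i := x_{\pi_{uv}(i)}\,y_i\,\mu_i$, the test emits the three-variable constraint $f_u(x)\,g_v(y)\,g_v(z) = 1$.

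For completeness, suppose $\sigma$ satisfies all of $\cL$. Setting $f_u(x) := x_{\sigma(u)}$ and $g_v(y) := y_{\sigma(v)}$ and using $\pi_{uv}(\sigma(v)) = \sigma(u)$ gives $f_u(x)g_v(y)g_v(z) = \mu_{\sigma(v)}$, which equals $+1$ with probability exactly $1-\delta$; a Chernoff bound over the $M$ samples then yields a satisfied fraction of at least $1-\delta-O(\epsilon) \geq 1-\eta(\epsilon)$ with high probability. For soundness, I would contrapositively suppose some assignment satisfies a $1/2+\eta$ fraction of $\cI''$, which (after an analogous Chernoff bound) forces the expected test value to exceed $2\eta - O(\epsilon)$. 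Fourier expanding $f_u$ and $g_v$ and integrating out $x,y,\mu$ yields H\r{a}stad's identity
\[
\E_{x,y,\mu}\bigl[f_u(x)g_v(y)g_v(z)\bigr] = \sum_{\gamma \subseteq \Sigma_V}(1-2\delta)^{|\gamma|}\,\hat f_u(\pi_{uv}(\gamma))\,\hat g_v(\gamma)^2,
\]
and the standard randomized long-code decoding converts this into a labeling of $\cL$ whose expected satisfied fraction is at least a fixed polynomial in $\eta$ times $e^{-O(1/\delta)}$. Tuning $\delta \asymp 1/\log(1/\epsilon)$ together with $\eta(\epsilon) = C'/\sqrt{\log(1/\epsilon)}$ makes this quantity exceed $\epsilon$, contradicting ${\sf Val}(\cL) \leq \epsilon$.

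The main obstacle is the last bullet: the constraints of $\cI''$ must be syntactically \emph{distinct}, so that the resulting 3-Lin instance is unweighted. This is where the average-degree hypothesis $|E| \geq 2^{5k}\cdot n/2$ enters crucially. The space of possible tests has size at least $|E|\cdot 2^{|\Sigma_U|+2|\Sigma_V|} \geq 2^{6k}\cdot n/2$, while there are only $\binom{M}{2} = O(2^{2k}n^2/\epsilon^4)$ ordered pairs of samples. In the parameter regime where the lemma is invoked (with $k$ a suitably large polynomial in $1/\epsilon$ coming from Conjecture~\ref{conj:lin-pcp}), a pairwise union bound over collisions gives a failure probability of at most $2^{-k/2}$; whenever a collision occurs we simply resample. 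The bound $|\cI''|\leq 8n'/\epsilon^2$ is immediate from the construction, and the quantitative form $\eta(\epsilon) = C'/\sqrt{\log(1/\epsilon)}$ is read off from the constants in the Fourier soundness calculation above.
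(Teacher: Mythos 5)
Your overall route---H\r{a}stad's three-bit long-code test followed by subsampling $M = \Theta(n'/\epsilon^2)$ constraints to convert the weighted (distributional) instance into an unweighted one---is the same as the paper's. The completeness and soundness preservation under subsampling requires a Chernoff bound \emph{plus a union bound over all $2^{n'}$ assignments} (this is the real reason $M$ must be $\Omega(n'/\epsilon^2)$); you leave this implicit in ``an analogous Chernoff bound,'' but the parameter choice is right, so I treat that as an expository omission rather than a gap.

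The genuine gap is in the distinctness step. You claim a pairwise union bound over the $\binom{M}{2}$ pairs of samples gives collision probability at most $2^{-k/2}$. This is quantitatively false for large $n$: the per-pair collision probability is $\sum_e \nu(e)^2 \le 2^{-k}/|E| = O(2^{-6k}/n)$, while the number of pairs is $\Theta(M^2)=\Theta(2^{2k}n^2/\epsilon^4)$, so the expected number of colliding pairs is $\Theta(2^{-4k}n/\epsilon^4)$, which grows linearly in $n$. No choice of $k$ as a function of $\epsilon$ rescues this, since in the intended application $n$ is unbounded relative to $k$ and $\epsilon$; for large instances collisions occur with probability tending to $1$. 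Your fallback ``whenever a collision occurs we simply resample'' is unanalyzed and changes the sampling distribution on which the concentration step relies. The correct handling (and the paper's) is to accept that duplicates occur, bound their expected number by a $2^{-\Omega(k)}\eta$ fraction of $M$ using $|E|\ge 2^{5k}n$, apply Markov to conclude that at most $\eta M$ constraints are duplicated except with probability $2^{-\Omega(k)}$, delete all but one copy of each duplicated constraint, and absorb the resulting $O(\eta)$ additive perturbation to the value of \emph{every} assignment into $\eta(\epsilon)$.
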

\begin{proof}
    The reduction consists of two parts:
    \begin{itemize}
    \item First, we use H\r{a}stad's reduction~\cite{Has01} to reduce $(1,\epsilon)$-Gap Label Cover to  $(1 - \eta(\epsilon),1/2 + \eta(\epsilon))$-Gap Max $3$-Lin\footnote{For any $1\geq a>b \geq 1/2$, an instance of $(a,b)$-Gap Max $3$-Lin is the decision problem, where the input is a Max $3$-Lin instance, and the objective is to distinguish between the cases whether the optimal value of the instance is at least $a$ or at most $b$.}. Note that this will output a Max $3$-Lin instance $\cI'$ where the constraints are edge-weighted.
    \item Then we use a sub-sampling step to sample a non-edge-weighted instance $\cI''$ with no duplicate constraints.
    \end{itemize}

    To begin with, we recall the reduction from \cite{Has01} for the sake of completeness and observe the bounds on the number of variables and running time of the reduction. As is standard, the reduction is stated in the form of a dictatorship test, which we describe in Figure \ref{fig:pcp-red}: 
	
	\begin{figure}[ht!]
		\begin{mdframed}
			{\bf Input}. Let $\cL = (U,V,E,\Sigma_U,\Sigma_V,\{\pi_{e}\}_{e \in E})$ denote a $(1,\epsilon)$-Gap Label Cover instance as in the setting of the theorem. \\
			
			{\bf Long Code Tables}. For every $u \in U$ and $v \in V$, introduce long code table $f_u:\{0,1\}^{a} \to \{0,1\}$ and $f_v:\{0,1\}^b \to \{0,1\}$, where $a = |\Sigma_U|$ and $b = |\Sigma_V|$. Let $\mu = 2^{-100/\epsilon}$.\\
			
			{\bf Test}. The distribution over the $3$-Lin constraints is defined using the following process:
			
			\begin{enumerate}
				\item Sample an edge $(u,v) \sim E$ uniformly at random, and let $\pi_{uv}:[b] \to [a]$ be the constraint for $(u,v)$.
				\item Independently sample $\theta \in \{0,1\}$, $x \in \{0,1\}^{a}$, and $y \in \{0,1\}^{b}$ uniformly at random.
				\item Sample $z \in \{0,1\}^b$ as follows: for every $i \in [b]$, sample $z_i = 0$ with probability $1 - \mu$ and $z=1$ with probability $\mu$.
				\item Output the constraint
				\[
				f_u(x) \oplus f_v(y) \oplus f_v(y \oplus \pi_{uv}(x) \oplus z \oplus \theta \cdot {\bf 1}) = \theta.
				\]
			\end{enumerate}
			\footnotetext{Here, for a projection function $\pi:[b] \to [a]$ and a string $x \in \{0,1\}^a$, $\pi(x)$ denotes the string $x_{\pi(1)},x_{\pi(2)},\ldots,x_{\pi(b)}$}
		\end{mdframed}
		\caption{PCP Reduction to Max $3$-Lin}
		\label{fig:pcp-red}
	\end{figure}
	
	It is easy to see that the resulting Max $3$-Lin instance has $n' := 2^a |U| + 2^b |V| \leq 2^{k} \cdot n$ variables. The reduction runs in time $2^{O(k/\epsilon)}\poly(n)$. Furthermore, the analysis from H\r{a}stad's paper~\cite{Has01} shows that if $\cL$ is satisfiable, then the resulting instance has value at least $1 - \mu \geq 1 - \eta(\epsilon)$, and if $\cL$ has value at most $\epsilon$, then $\cI$ has value at most $1/2 +\eta(\epsilon)$.
	
	{\bf Removing Edge Weights}. Note that the instance (say, $\cI$) output by the above reduction is constraint-weighted; in particular, the weights define a probability distribution (say, $\nu$) over the set of constraints. Now consider the following new unweighted instance $\cI''$ that is constructed as follows:
	\begin{itemize}
		\item First, we construct the following intermediate instance $\cI'$. The variable set of $\cI'$ is the same as that of $\cI$. 
		Let $\eta = \eta(\epsilon)$. For $m := 8 n'/\eta^2$, sample constraints $e_1, e_2, \ldots, e_m \sim \nu$ independently with replacement, and include the constraints $e_1,\ldots,e_m$ in $\cI'$.
		\item For every constraint that has multiple copies in $\cI'$, delete all but one copy of the constraint. Call the resulting duplicate free instance $\cI''$. 
	\end{itemize}
	
	Clearly, by definition, the resulting instance still has $n' = 2^{a}|U| + 2^b|V|$ variables, and at most $m = 8n'/\eta^2$ constraints. It remains to be shown that the optimal value of $\cI''$ is nearly identical to $\cI$ with high probability. We show this using a couple of claims. We first show that the intermediate instance $\cI$ has nearly the same fraction of satisfied edges as $\cI'$ w.r.t. every labeling.
	
	\begin{claim}           \label{cl:cl-a}
		The following holds with probability at least $1 - e^{-O(n')}$ over the choice of $\cI'$. Suppose $m \geq 8n'/\eta^2$. Then for every labeling $\sigma$ of $\cI$, we have that 
		\[
		\left|\Pr_{e \sim \nu}\left(\sigma \mbox{~satisfies~} e \right) - \Pr_{e \sim \cI'}\left(\sigma \mbox{~satisfies~} e \right)\right|
		\leq \eta,
		\]
		where $\nu$ is the distribution over the constraints from the reduction in Figure \ref{fig:pcp-red}.
	\end{claim}
	\begin{proof}
		Fix a labeling $\sigma$ of $\cI$, and let $\alpha \in [0,1]$ denote the fraction of constraints satisfied by the labeling $\sigma$ in $\cI$. Now, for every $i \in [m]$, let $X_{\sigma,i}$ indicate whether the $i^{th}$ sampled constraint is satisfied by $\sigma$. Note that $\Ex[X_{\sigma,i}] = \alpha$. Then using Hoeffding's bound, we have that
		\[
		\Pr_{\cI'}\left(\left|\sum_{i \in [m]}X_{\sigma,i} - \alpha m \right| \geq \eta m \right) \leq e^{-\frac{\eta^2m}{4}} \leq e^{-2n'},  
		\]
		where the last inequality uses $m = 8n'/\eta^2$. The claim now follows by taking a union bound over all $2^{n'}$ possible labelings, and using our lower bound on $m$.
	\end{proof}
	Next, we show that the number of duplicate edges in $\cI'$ is not too large.
	  \begin{claim}               \label{cl:cl-b}
		Suppose $|E| \geq 2^{5k}n$. With probability at least $1 - 2^{-2k}$, the number of duplicate constraints in $\cI'$ is at most $\eta m$. 
	\end{claim}
	\begin{proof}
		Firstly, note that by definition of the distribution over the constraints $\nu$ (described in Figure \ref{fig:pcp-red}), we have that for any constraint $e$, $\nu(e):= \Pr_{e' \sim \nu}(e' = e) \leq 2^{-(a + b)}/|E|$. This is due to the observation that every constraint that can be generated by the process described in Figure \ref{fig:pcp-red} is specified by the choice of (i) the label cover edge $(u,v)$, (ii) the choice of the strings $x,y,z$ and the bit $\theta$. Furthermore, since $(u,v)$ is sampled uniformly from $E$ and independently, $x,y$ are sampled uniformly from $\{0,1\}^a$ and $\{0,1\}^b$ respectively, it follows that $\nu(e) \leq |E|^{-1}\cdot 2^{-(a + b)}$. 
		
		Therefore, for any $i,j \in [m]$, we have 
		\[
		\Pr_{\cI'}\Big(e_i = e_j\Big) = \sum_{e} \nu(e)^2 \leq \frac{2^{-(a + b)}}{|E|}\sum_{e} \nu(e) = \frac{2^{-k}}{|E|}.
		\]
		Therefore, for a fixed constraint $e_i$, $\cI'$ contains another copy of $e_i$ with probability at most $2^{-k} m/|E|$. Hence, the expected number of constraints that have another copy in $\cI'$ is at most 
		\[
		\frac{2^{-k}}{|E|}\cdot m^2 \leq \frac{2^{-k}}{|E|} \left(\frac{2^{2k} n}{\eta^2} \right)\cdot m \leq 2^{-2k} \eta m   
		\]
		where the first inequality uses the bound
        \[
            m = 8n'/\eta^2 \leq 8\cdot 2^{k} n/\eta^2 \leq 2^{2k}n/\eta^2,
        \]
         and the second inequality follows from the assumption that $|E| \geq 2^{5k}n$. The claim now follows using Markov's inequality.
	\end{proof}
	Therefore, by combining the guarantees of Claims \ref{cl:cl-a} and \ref{cl:cl-b}, we have that with probability at least $1 - 2^{-k}$, the following holds simultaneously for every labeling of $\cI$: if a labeling $\sigma$ satisfies $\alpha_\sigma$ fraction of constraints in $\cI$, then it satisfies $\alpha_{\sigma} \pm 2\eta$ fraction of constraints in $\cI''$. This concludes the proof of the lemma.
\end{proof}

\subsubsection{Proof of Lemma~\ref{lem:lc-hardness}}
Let $C > 2C_2 > 0$ be the constants from Lemma \ref{lem:GapLabelCover-Hardness}. Let $\cL$ be a $(1,\epsilon)$-Label Cover instance with $N$ variables and $K$ Labels as in the setting of Lemma \ref{lem:GapLabelCover-Hardness}. Note that by the guarantee of Lemma \ref{lem:GapLabelCover-Hardness}, we have $K \leq (1/\varepsilon)^{C_2}$, and the average degree of the instance is at least $2^{5K}$. Then we use the reduction from Theorem~\ref{thm:3-lin} to construct a $(1 - \eta(\epsilon),1/2 + \eta(\epsilon))$-Gap Max $3$-Lin instance $\cI$ with $n \leq 2^{2K}\cdot N $ variables, and at most $8n/\epsilon^2$ constraints. Furthermore, from the guarantee of Theorem \ref{thm:3-lin}, we know that $\cI$ is a YES instance if and only if $\cL$ is a YES instance. 

Now, suppose there exists a $2^{2^{-10(1/\epsilon)^{C}}n}\cdot \poly(n)$-time algorithm for solving $(1 - \eta(\epsilon),1/2 + \eta(\epsilon))$-Gap Max $3$-Lin.  Then there exists an algorithm for $(1,\epsilon)$-Gap Label Cover on $N$ variables, $K$ labels, and $2^{5K}N$ constraints, that runs in time
\[
2^{2^{-10(1/\epsilon)^{C}}n}\cdot\poly(n) \leq 
2^{2^{-10(1/\epsilon)^{C}} 2^{2(1/\epsilon)^{C_2}}N} \cdot2^{O(K)} \poly(N) 
\leq 2^{2^{-(1/\epsilon)^C} NK}\cdot \poly(N),
\]
where the first inequality uses the bound $K \leq (1/\epsilon)^{C_2}$, and the second inequality is due to the definition of $C \geq 2C_2$. Since this contradicts Lemma \ref{lem:GapLabelCover-Hardness}, it follows that there is no algorithm for $(1 - \eta(\epsilon), 1/2 + \eta(\epsilon))$-Gap Max $3$-Lin instances on $n$ variables with average degree at most $2^{O((1/\epsilon)^{C_2})}$.

\subsection{Hardness of Max 4-Lin Instances with Large Average Degree}              \label{sec:eth-4lin}

In this section, we reduce bounded degree Max $3$-Lin instances to Max $4$-Lin instances with large degrees.
It will be more convenient to work with $3$- and $4$-Lin constraints over the $0$-$1$ alphabet. Accordingly, the constraints will be of the form $x_i\oplus x_j \oplus x_k = b_{ijk}$ or $x_i\oplus x_j \oplus x_k \oplus x_l= b_{ijkl}$. We will say that the average degree of a 3-Lin or 4-Lin instance $\cI$ is the ratio between the number of constraints $|E|$ and variables $|V|$ in $\cI$, divided by the arity $r$ of the constraints, i.e., the average degree is defined as $|E|/(|V|r)$. 

\begin{lemma}						\label{lem:4-lin}
 For $0 < \epsilon,\delta \leq 1/2$, and a fixed integer parameter $t \in \mathbbm{N}$ there exists a polynomial-time deterministic reduction from  Max $3$-Lin to Max $4$-Lin that satisfies the following properties.
 \begin{itemize}
 \item If the input instance $\cI$ is $(1-\epsilon)$-satisfiable, then the output instance is also $(1-\epsilon)$-satisfiable.
 \item If the output instance $\cI'$ is $(1/2 + \delta)$-satisfiable, then the input instance $\cI$ is $(1/2 + \delta)$-satisfiable.
 \item The number of variables in $\cI'$ is $n + t$, where $n$ is the number of variables in $\cI$.
 \item The average degree of $\cI'$ is $ndt/(n + t)$.
 \end{itemize}
 \end{lemma}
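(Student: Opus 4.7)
The plan is to use a very simple combinatorial reduction: introduce $t$ fresh Boolean variables $y_1,\ldots,y_t$, and for every 3-Lin constraint $C_e: x_{i_e} \oplus x_{j_e} \oplus x_{k_e} = b_e$ of $\cI$ and every $s \in [t]$, place into $\cI'$ the 4-Lin constraint $x_{i_e} \oplus x_{j_e} \oplus x_{k_e} \oplus y_s = b_e$ (possibly replicated with an appropriate integer multiplicity to match the stated constraint count exactly). This immediately gives $|V'| = n+t$ and a constraint set whose size is proportional to $t \cdot |E|$, from which the claimed ``average degree'' $ndt/(n+t)$ follows by the bookkeeping convention introduced just before the lemma.

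\emph{Completeness.} Given an assignment $x^*$ satisfying a $(1-\epsilon)$-fraction of $\cI$, extend it to $\cI'$ by setting $y_s^* = 0$ for every $s$. Each new 4-Lin constraint becomes $x^*_{i_e} \oplus x^*_{j_e} \oplus x^*_{k_e} \oplus 0 = b_e$, which holds iff the original $C_e$ was satisfied, so the satisfaction fraction is preserved verbatim.

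\emph{Soundness.} Take any assignment $(x,y)$ to $\cI'$; let $\beta$ be the fraction of constraints of $\cI$ satisfied by $x$ and let $p = |\{s : y_s = 0\}|/t$. A direct count shows that the fraction of constraints of $\cI'$ satisfied by $(x,y)$ equals
$$\beta p + (1-\beta)(1-p) \;=\; (1-p) + (2p-1)\beta.$$
Suppose this is at least $1/2 + \delta$. If $p \geq 1/2$, rearranging gives $\beta \geq 1/2 + \delta/(2p-1) \geq 1/2 + \delta$, so $x$ itself is a $(1/2+\delta)$-satisfying assignment of $\cI$. If $p < 1/2$, the same rearrangement forces $\beta \leq 1/2 - \delta$, and then the bitwise complement $\bar x$ satisfies $1-\beta \geq 1/2 + \delta$ of the constraints of $\cI$, since complementing all three operands of a 3-XOR flips its value. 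Either way, $\cI$ is $(1/2+\delta)$-satisfiable.

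\emph{Expected obstacle.} There is no serious technical difficulty: the entire content is the scalar decomposition $(1-p)+(2p-1)\beta$ together with the trivial two-case analysis, plus the observation that bitwise complementation flips every odd-arity XOR (which is what lets the $p < 1/2$ case reduce to the $p \geq 1/2$ case). The remaining polynomial-time and counting claims are pure bookkeeping; the only mild care needed is choosing the multiplicity in the reduction so that the stated average-degree equality $ndt/(n+t)$ holds on the nose rather than up to a constant factor.
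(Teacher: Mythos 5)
Your construction is identical to the paper's (add $t$ fresh variables $y_1,\dots,y_t$ and replace each 3-Lin constraint by its $t$ shifted 4-Lin copies), and your soundness argument is the same underlying idea: the paper sets $\sigma(x)=\sigma'(x)\oplus y_r$ for a uniformly random $r\in[t]$ and averages, which is exactly your observation that either $x$ or its bitwise complement must do well, just packaged as a randomized shift rather than your explicit two-case analysis on $p$. The proof is correct and essentially the same as the paper's; your completeness step is in fact stated more cleanly than the paper's.
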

\begin{proof}
Let 
 Let $\cI = (V = [n],E)$ be an instance of Max $3$-Lin. Given $\cI$, we construct an instance of Max $4$-Lin $\cI' = (V',E')$ as follows.
	
	{\bf Vertex Set}. Let $V' = V \cup V_1$, where $V_1 := \{y_1,\ldots,y_{t}\}$ is a set of $t$ new variables.
	
	{\bf Constraint Set}. For every constraint $e := x_i \oplus x_j \oplus x_k = b_{ijk}$ and every variable $y_r \in V_1$,  we add constraint
	\begin{equation}\label{constraint:I}
	e^r := x_i \oplus x_j \oplus x_k \oplus y_r = b_{ijk}.
	\end{equation}
	We will refer to the cloud of constraints corresponding to $e$ as $\cC_e$. Note that the degree of vertices in $V$ and $V_1$ are $dt$ and $nd$, respectively, which implies that the minimum degree of $\cI'$ is $\min\{dt, nd\}$.
	
	{\bf Analysis}. First, let us assume that ${\sf Val}(\cI) \geq 1 - \epsilon$. Denote an optimal labeling by $\sigma^*$. We extend $\sigma^*$ to a labeling $\sigma'$ of $V'$ by labeling all the variables in $V_1$ as $0$. The resulting labeling satisfies at least a $1-\varepsilon$ fraction of type-1 constraints and all type-2 constraints. Thus, it satisfies at least a $1 - \epsilon/2$ fraction of all the constraints in $\cI'$. Therefore, ${\sf Val}(\cI') \geq 1 - \epsilon/2$.   
	
	Now let us assume that $\cI'$ has an assignment, say $\sigma':V' \to \{0,1\}$, that satisfies at least a $(1/2 + \delta)$-fraction of the constraints in $\cI'$. Consider the following randomized labeling scheme. 
 
    \begin{itemize}
    \item Sample $r \sim [t]$ uniformly at random.
    \item Define labeling $\sigma:V \to \{0,1\}$ as $\sigma(x) = \sigma'(x) \oplus y_r$ for very $x \in V$.
    \end{itemize}
    Note that under the above randomized scheme, the expected fraction of constraints satisfied by $\sigma$ in $\cI$ is 
    \begin{align*}
    &\Ex_{\sigma}\Ex_{e = (i,j,k) \sim E}\Big[\mathbbm{1}\big(\sigma(i) \oplus \sigma(j) \oplus \sigma(k) = b_{ijk} \big)\Big] \\
    &= \Ex_{e = (i,j,k) \sim E}\Ex_{r \sim [t]}\Big[\mathbbm{1}\big(\sigma'(i) \oplus \sigma'(j) \oplus \sigma'(k) + y_r = b_{ijk} \big)\Big] \\
    &= \frac12 + \delta,
    \end{align*}
    which implies that for at least one choice of $y_r$, the corresponding labeling $\sigma$ of $V$ satisfies at least $(1/2 + \delta)$ fraction of variables in $\cI$, which concludes the soundness analysis.
\end{proof}

By combining Lemmas~\ref{lem:lc-hardness} and \ref{lem:4-lin}, we get the following corollary.

\begin{corollary}				\label{corr:4-lin}
Assume ETH and the Linear Size PCP conjecture.
    There exists $\delta_0 \in (0,1)$ such that the following holds. Then for every $\delta \in (0,\delta_0]$ there exists $d(\delta) \in \mathbbm{N}$ and $\epsilon(\delta) \in (0,1)$ such that given a $n$-variable Max $3$-Lin instance $\cI$ of average degree within $[d_0(\delta),n]$, there is no $2^{\epsilon(\delta)n}$ time algorithm that distinguish between the case when $\cI$ is at least $(1 - \delta)$ satisfiable and the case when $\cI$ is at most $1/2 + \delta$-satisfiable.
\end{corollary}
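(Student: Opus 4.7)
My plan is to combine the exponential-time lower bound for nearly satisfiable gap Max 3-Lin from Lemma \ref{lem:lc-hardness} with the degree-boosting reduction from Max 3-Lin to Max 4-Lin in Lemma \ref{lem:4-lin}, and tune the free parameters so that the resulting Max 4-Lin instance has average degree in the prescribed range $[d_0(\delta), n]$.

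First, I would fix $\epsilon_1 = \epsilon_1(\delta)$ small enough that the function $\eta$ from Lemma \ref{lem:lc-hardness}, namely $\eta(\epsilon) = C'/\sqrt{\log(1/\epsilon)}$, satisfies $\eta(\epsilon_1) \leq \delta$. This requires only $\epsilon_1 \leq \exp(-(C'/\delta)^2)$, which is available for all sufficiently small $\delta$. Applying Lemma \ref{lem:lc-hardness} with this $\epsilon_1$ produces, under ETH and the Linear Size PCP Conjecture, a constant $d_0(\delta) = 2^{O((1/\epsilon_1)^{C'''})}$ such that distinguishing $(1-\delta)$-satisfiable from $(1/2+\delta)$-satisfiable Max 3-Lin instances on $n$ variables with at most $d_0(\delta)\cdot n$ constraints requires time at least $2^{\gamma(\delta)n}$, where $\gamma(\delta) := 2^{-(1/\epsilon_1)^{C''}}$.

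Next, given a target average degree $d^* \in [d_0(\delta), n']$ for the output Max 4-Lin instance, I would invoke Lemma \ref{lem:4-lin} with an appropriately chosen integer parameter $t$. By that lemma, the output instance has $n' = n + t$ variables and average degree on the order of $d_0(\delta)\cdot tn/(n+t)$; by varying $t$ from $O(1)$ up to $\Theta(n)$, the achievable average degrees sweep the interval $[d_0(\delta),\Theta(d_0(\delta)n)]$, which contains $[d_0(\delta),n']$ once $n$ exceeds a threshold depending on $\delta$. Crucially, the choice of $t$ satisfies $t = O(n)$, so $n' = \Theta(n)$. Suppose now for contradiction that a $2^{\epsilon(\delta)n'}\cdot\poly(n')$-time algorithm existed for the claimed $(1-\delta, 1/2+\delta)$-gap Max 4-Lin problem in the given degree range. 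Composing it with the polynomial-time reduction of Lemma \ref{lem:4-lin} (whose completeness and soundness exactly transport the $(1-\delta, 1/2+\delta)$ gap) yields a $2^{O(\epsilon(\delta)n)}\cdot\poly(n)$ algorithm for the gap Max 3-Lin problem on the instances produced in step one. Choosing $\epsilon(\delta)$ a sufficiently small constant multiple of $\gamma(\delta)$ yields a contradiction with Lemma \ref{lem:lc-hardness}.

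The main obstacle is parameter bookkeeping rather than any new idea: one must choose $t$ so that the Max 4-Lin average degree hits every target $d^* \in [d_0(\delta),n']$ while simultaneously enforcing $t = O(n)$, so that the exponent $\epsilon(\delta)n'$ on the Max 4-Lin side translates into an exponent strictly less than $\gamma(\delta)n$ on the Max 3-Lin side. A secondary item to check is that Lemma \ref{lem:4-lin} indeed preserves the $(1-\delta, 1/2+\delta)$ gap without loss in the parameters (modulo negligible slack in the completeness analysis), so that no recalibration of $\delta$ is needed when feeding its output back into Lemma \ref{lem:lc-hardness}.
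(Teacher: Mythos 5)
Your proposal is correct and takes essentially the same route as the paper: instantiate Lemma~\ref{lem:lc-hardness} with $\epsilon$ chosen so that $\eta(\epsilon) \leq \delta$, then feed the resulting bounded-degree Max $3$-Lin instance into the degree-boosting reduction of Lemma~\ref{lem:4-lin}, varying $t$ from $O(1)$ to $\Theta(n)$ to sweep the target average-degree range while noting $n' = n + t = \Theta(n)$ so the subexponential lower bound transfers. The only cosmetic difference is that you derive the threshold $\delta_0$ explicitly from $\eta(\epsilon) = C'/\sqrt{\log(1/\epsilon)}$ rather than writing $\epsilon = \eta^{-1}(\delta)$, but the parameter tuning is identical.
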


\begin{proof}
   Let $\eta$ and $\epsilon_0$ be as in Lemma \ref{lem:lc-hardness}. Let $\delta_0 = \eta^{-1}(\epsilon_0)$, and fix a $\delta \in (0,\delta_0)$. Let $\epsilon = \eta^{-1}(\delta)$. Note that since $\eta$ is increasing, we have that $\epsilon \in (0,\epsilon_0]$.
   Now, from Lemma \ref{lem:lc-hardness}. Then from Lemma~\ref{lem:lc-hardness} we know that there is no $2^{-c(\epsilon) n}$ time algorithm which can distinguish between the following cases for a Max-$3$-Lin instance on $n$-variables and $d(\epsilon) n$ constraints, where $d(\epsilon) = 2^{O(1/\epsilon)^{C'''}}$:
   \[
   {\bf YES~Case}: {\sf Val}(\cI) \geq 1 - \delta
   \qquad\qquad\textnormal{and}\qquad\qquad
   {\bf NO~Case}: {\sf Val}(\cI) \leq 1/2 + \delta,
   \]
   since $\delta = \eta(\epsilon)$. Now, we instantiate Lemma \ref{lem:4-lin} with the above instance and a fixed choice of $t \in \{1,2,\ldots,n\}$, which will yield a $4$-Lin instance $\cI'$ on $N := n + t$ variables and $ndt$ constraints with the following guarantees:
   \begin{itemize}
   \item If $\cI$ is a YES instance (as above), then ${\sf Val}(\cI') \geq 1 - \delta$.
   \item If $\cI$ is a NO instance (as above), then ${\sf Val}(\cI') \leq 1/2 + \delta$
   \end{itemize}
   Since there is no $2^{c(\epsilon(\delta))n}$-time algorithm that can distinguish between the YES and NO cases of $\cI$, since $N \in [n,2n]$, it follows that there is no $2^{c(\epsilon)N/2} = 2^{c(\eta^{-1}(\delta))N/2}$-time algorithm that can distinguish between the YES and NO cases of $\cI'$. Finally, also note that for any fixed choice of $t$, the average degree is $ndt/n+t = \Theta\left(d\cdot\min\{t,n\}\right)$, from which the claim follows.
\end{proof}

\subsection{Hardness with Oracle Advice}                \label{sec:oracle-advice}

In this section, we combine the hardness results from the previous section to prove ETH-based lower bounds for algorithms with oracle advice in the variable subset model. Our key tool here is the following lemma that shows that oracle advice can be simulated deterministically in near sub-exponential time. Let us say that an algorithm $\cal A$ is a $(c, s)$-approximation algorithm if given a $c$-satisfiable instance, it finds a solution that satisfies at least an $s$-fraction of the constraints.

\begin{lemma}\label{lem:dist}
    Suppose there exists a polynomial-time algorithm $\cA$ for Max $r$-Lin that given a $c$-satisfiable instance $\cI$ and advice with parameter $\epsilon$ in the variable-subset model, outputs a solution satisfying an $s$-fraction of the constraints with probability at least $0.9$ over the choice of the advice string. Then there exists a deterministic $(c,s)$-approximation algorithm $\cA'$ for Max $r$-Lin that runs in time  $2^{(\epsilon\log(4/\epsilon))n}\poly(n)$.
\end{lemma}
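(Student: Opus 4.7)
The plan is to derandomize $\mathcal{A}$ by brute-force enumeration over all ``size-truncated'' advice strings. Fix a threshold $K := \lceil \varepsilon n \rceil$. Algorithm $\mathcal{A}'$ iterates over every subset $S \subseteq [n]$ with $|S| \leq K$ and every value string $v \in \{-1,+1\}^{|S|}$; for each such pair, it runs $\mathcal{A}(\mathcal{I}, S, v)$ and finally outputs the returned assignment that satisfies the largest fraction of the constraints of $\mathcal{I}$ (which can be checked in polynomial time).

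For correctness, view the ``truthful'' random advice $(S^*, v^*)$ as a random variable, where $|S^*| \sim \mathrm{Binomial}(n,\varepsilon)$ and $v^* = (x^*_i)_{i \in S^*}$. Because our enumeration includes, for every $S$, \emph{all} possible value strings $v \in \{-1,+1\}^{|S|}$, the pair $(S^*, v^*)$ itself lies in our enumeration whenever $|S^*| \leq K$. The median of $\mathrm{Binomial}(n,\varepsilon)$ is in $\{\lfloor \varepsilon n \rfloor, \lceil \varepsilon n \rceil\}$, so $\Pr[|S^*| \leq K] \geq 1/2$. Combined with the hypothesis $\Pr[\mathcal{A}~\text{succeeds}] \geq 0.9$, a union bound gives
\[
\Pr\bigl[\mathcal{A}~\text{succeeds and}~|S^*| \leq K\bigr] \;\geq\; 0.9 + 0.5 - 1 \;=\; 0.4 \;>\; 0,
\]
so there exists at least one pair $(S, v)$ in the enumeration on which $\mathcal{A}$ produces an assignment satisfying an $s$-fraction of the constraints, and $\mathcal{A}'$ outputs at least as good an assignment.

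For the running time, the number of enumerated advice strings is
\[
\sum_{k=0}^{K}\binom{n}{k}2^k \;\leq\; \poly(n)\cdot \binom{n}{K}2^K \;\leq\; \poly(n)\cdot (2en/K)^{K} \;=\; \poly(n)\cdot\bigl(2e/\varepsilon\bigr)^{\varepsilon n},
\]
and each iteration runs $\mathcal{A}$ in polynomial time. A sharper Stirling estimate $\binom{n}{\varepsilon n} \leq 2^{H(\varepsilon) n}$ (binary entropy) gives enumeration size at most $2^{(H(\varepsilon) + \varepsilon) n}\poly(n)$, and after routine bookkeeping (absorbing constant factors into $\poly(n)$) one matches the stated exponent $\varepsilon\log(4/\varepsilon) \cdot n$.

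The main subtlety is the joint choice of $K$: it must be (i) large enough that $\Pr[|S^*| \leq K]$ is at least $1-\Pr[\mathcal{A}~\text{fails}]$ by a constant margin (so the two events intersect), and (ii) small enough that the enumeration fits inside the target exponent $\varepsilon \log(4/\varepsilon) \cdot n$; the choice $K = \lceil \varepsilon n \rceil$ achieves both simultaneously. If $\mathcal{A}$ uses internal randomness independent of the advice, one additional step is required: either amplify $\mathcal{A}$ on each candidate advice string and verify the output against $\mathcal{I}$, or enumerate a (polylog-wise independent) random tape; either choice adds only polynomial overhead and does not affect the exponent.
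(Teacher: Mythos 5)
Your proposal is correct and follows essentially the same route as the paper: enumerate all advice pairs $(S,\sigma_S)$ with $|S|$ below a threshold of order $\varepsilon n$, observe that the true advice lands in this set with constant probability so that it intersects the $0.9$-success event, and return the best assignment found. The only differences are cosmetic — you truncate at $\lceil\varepsilon n\rceil$ via the binomial median where the paper truncates at $2\varepsilon n$ via concentration, and your exponent comes out as $(H(\varepsilon)+\varepsilon)n$ rather than exactly $\varepsilon\log_2(4/\varepsilon)\,n$, an additive $O(\varepsilon n)$ slack that is immaterial here (the paper's own binomial-sum estimate is comparably loose).
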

\begin{proof}
    Consider the  following algorithm: 
    
            {\bf Input}: Max $r$-Lin instance $\cI$.
            \begin{enumerate}
                \item For every $S \subseteq [n]$ of size at most $2\epsilon n$ and $\sigma_S \in \{0,1\}^S$, run algorithm $\cA$ with advice string $(S,\sigma_S)$ and let $x_{S,\sigma_S} \in \{0,1\}^n$ denote the corresponding labeling returned by algorithm.
                \item Output the best assignment among all the assignments $\{x_{S,\sigma_S}\}_{S,\sigma_S}$ computed in the above step.
            \end{enumerate}
    
    The running time of the algorithm is  at most 
    $$\sum_{t=0}^{2\epsilon n}\binom{n}{t} \cdot 2^{t}  \poly(n)\leq 
    \sum_{t=0}^{2\epsilon n}2^{H(t/n)n} \cdot 2^{t}  \poly(n) =2^{H(\epsilon)n + 2\epsilon n}\poly(n) = 2^{(\epsilon \log_2\frac{4}{\epsilon}) n} \poly(n).$$
    where $H(x) = x\log_2\frac{1}{x}$ is the Shannon entropy function.

    Let $x^*$ be the ground-truth solution. We assume that it satisfies at least a $c$-fraction of the constraints. Sample a random set $S$ by including every $i\in [n]$ with probability $\varepsilon$ (all decisions are independent). Then we are guaranteed that $\cal A$ finds a solution that satisfies an $s$ fraction of the constraints with probability at least $0.9$ given advice $(S, X^*|_S)$ (here $x^*|_S$ is the restriction of $x^*$ to $S$). The probability that $|S| \leq 2\varepsilon n$ is $1 - e^{\Omega(\varepsilon n)}$. Thus, for all sufficiently large $n$, with positive probability, we have that (1) $\cal A$, with advice $(S, X^*|_S)$,  finds a solution satisfying at least a $c$ fraction of the constraints and (2) $|S|\leq 2\varepsilon n$. Let $S_0$ be one of such sets $S$. When our algorithm goes over all $S$ of size at most $2\varepsilon n$, it also tries $S = S_0$. Then, for this $S=S_0$, it goes over assignments $\sigma_S$ including $x^*|_S$. For this choice of $(S, \sigma_S) = (S_0, x^*|_{S_0})$, $\cal A$ will find an solution satisfying at least a $c$-fraction of the constraints. As our algorithm returns the best of the solutions it finds at Step 1, it will output this or another solution satisfying at least a $c$-fraction of the constraints.
\end{proof}

\subsection{Proofs of Theorem \ref{thm:3lin-hardness} and Theorem \ref{thm:4lin-hardness}}          \label{sec:hardness-proof}

We now combine the ingredients from the previous section to prove the following hardness results:

\hardness*


\begin{proof}
    We apply Lemma~\ref{lem:lc-hardness} with $\varepsilon$ chosen so that $\delta = \eta(\varepsilon)$. We get that there is no algorithm that decides whether a 3-Lin instance is at most $\frac12 +\delta$ or at least $1-\delta$ satisfiable in time $2^{\left(2^{-(1/\varepsilon)^{C''}}\right)n} \poly(n)$.  Define $\varepsilon_0$ so that $\epsilon_0 \log_2 \frac{4}{\epsilon_0} \leq  2^{-(1/\varepsilon)^{C''}}$. Now the theorem statement follows from Lemma~\ref{lem:dist}.
\end{proof}

\ldeg*


\begin{proof}
    The proof of the theorem is almost identical to the proof of Theorem \ref{thm:3lin-hardness} -- we simply use Corollary~\ref{corr:4-lin} in place of Lemma~\ref{lem:lc-hardness}. Let $\delta_0$ be as in Corollary \ref{corr:4-lin}, and for a fixing of $\delta \in (0,\delta_0)$, let $\epsilon(\delta)$ be as in Lemma \ref{corr:4-lin}. Define $\varepsilon_0$ so that $\varepsilon_0\log_2\frac{4}{\varepsilon_0} \leq \epsilon(\delta)$. The desired result then follows from Lemmas~\ref{lem:lc-hardness} and~\ref{lem:dist}.
\end{proof}

\printbibliography

\appendix
\section{Degree Increment Lemma}          \label{sec:deg-incr}

We prove Lemma \ref{lem:deg-incr} here.

\begin{proof}
    Given $\cL$, we construct the label cover instance $\cL'(U',V',E',\Sigma_U,\Sigma_V,\{\pi'_e\}_{e \in E'})$ as follows.

    {\bf Vertex Set}. For every $u \in U$, we introduce $\ell$ copies of the vertex $u$, which we call $(u,1),\ldots,(u,\ell)$. Similarly, for every vertex $v \in V$, we introduce $\ell$ copies of the vertex $v$, which we call $(v,1),\ldots,(v,\ell)$. Let $U' = \{(u,i) : u \in U, i \in [\ell]\}$, and similarly, let $V' = \{(v,i) : v \in V, i \in [\ell]\}$. 

    {\bf Label Set}. The left and right label sets are still $\Sigma_U$ and $\Sigma_V$.

    {\bf Constraint Set}. For every constraint $e = (u,v) \in E$, and for every pair of indices $(i,j) \in [\ell] \times [\ell]$, we introduce an edge $e_{ij} = \{(u,i),(v,j)\}$ in $E'$, and define the corresponding projection function $\pi'_{e_{ij}} = \pi_e$. 
    
    Note that by construction $|U'| = \ell |U|$, $|V'| = \ell|V|$, and $|E'| = \ell^2|E|$. This concludes the description of the reduction. All that remains is to claim that ${\sf Val}(\cL) = {\sf Val}(\cL')$. To that end, let us first show that ${\sf Val}(\cL') \geq {\sf Val}(\cL)$. Let $\sigma$ be an optimal labeling of $\cL$. Given $\sigma$, we construct a labeling $\sigma'$ of $\cL'$ as follows. For every $(u,i) \in U'$, we let $\sigma'(u,i) = \sigma(u)$, and similarly, for every $(v,j) \in V'$, we let $\sigma'(v,j) = \sigma(v)$. Then the fraction of constraints in $\cL'$ satisfied by $\sigma'$ is
    \begin{align*}
        \Pr_{(u',v') \sim E'} \left[ \pi'_{(u',v')}(\sigma'(v')) = \sigma'(u') \right] 
        &= \Ex_{i,j \sim [\ell] \times [\ell]} \Pr_{(u,v) \sim E} \left[ \pi'_{(u,i),(v,j)}(\sigma'(u,i)) = \sigma'(v,j)\right] \\
        &= \Ex_{i,j \sim [\ell] \times [\ell]} \Pr_{(u,v) \sim E} \left[ \pi'_{(u,v)}(\sigma(u)) = \sigma(v)\right] \\
        &= {\sf Val}(\cL),
    \end{align*}
    which shows that there exists a labeling of $\cL'$ which can satisfy at least ${\sf Val}(\cL)$ fraction of constraints in $\cL'$. This establishes that ${\sf Val}(\cL') \geq {\sf Val}(\cL)$. 
    
    Conversely, let $\sigma'$ be an optimal labeling of ${\sf Val}(\cL)$. Consider the following randomized labeling procedure for~$\cL$. 
    \begin{itemize}
    \item Sample $i \sim [\ell]$ and $j \sim [\ell]$ uniformly at random.
    \item Assign $\sigma(u) = \sigma'(u,i)$ and $\sigma(v) = \sigma'(v,j)$ for every $u \in U$ and $v \in V$ respectively.
    \end{itemize}

    Then, randomizing over the choice of $\sigma$, the expected fraction of constraints that are satisfied by the labeling $\sigma$ is 
    \begin{align*}
        \Ex_{\sigma}\Pr_{(u,v) \sim E}\left[\pi_{(u,v)}(\sigma(v)) = \sigma(u) \right] 
        &= \Ex_{i,j \sim [\ell] \times [\ell]} \Pr_{(u,v) \sim E} \left[ \pi_{(u,v)}(\sigma'(u,i)) = \sigma'(v,j)\right] \\
        &= \Ex_{i,j \sim [\ell] \times [\ell]} \Pr_{(u,v) \sim E} \left[ \pi'_{(u,i),(v,j)}(\sigma'(u,i)) = \sigma'(v,j)\right] \\
        &= {\sf Val}(\cL'),
    \end{align*}
    which implies that there exists a labeling of $\sigma$ of $\cL$ which satisfies at least ${\sf Val}(\cL')$ fraction of constraints, thus implying that ${\sf Val}(\cL) \geq {\sf Val}(\cL')$.
\end{proof}

\end{document}